\DeclareMathOperator{\C}{\mathcal{C}}
\DeclareMathOperator{\Aut}{Aut}
\DeclareMathOperator{\End}{End}
\DeclareMathOperator{\rk}{rk}
\newtheorem{theorem}{Theorem}[section]
\newtheorem{lemma}[theorem]{Lemma}
\newtheorem{corollary}[theorem]{Corollary}
\newtheorem{definition}[theorem]{Definition}
\newtheorem{proposition}[theorem]{Proposition}
\newtheorem{remark}[theorem]{Remark}
\newtheorem{openq}[theorem]{Open Question}
\newtheorem{question}[theorem]{Question}
\newcommand{\fqn}{\mathbb{F}_{q^n}}
\newcommand{\fqm}{\mathbb{F}_{q^m}}
\newcommand{\cL}{{\mathcal L}}
\newcommand{\F}{{\mathbb F}}
\newcommand{\GL}{\hbox{{\rm GL}}}
\newcommand{\fq}{{\mathbb F}_{q}}
\newcommand{\ev}{\mathrm{ev}}
\title{Divisible linear rank metric codes}
\author{Olga Polverino, Paolo Santonastaso, John Sheekey and Ferdinando Zullo}
\date{ }
\begin{document}
\maketitle

\begin{abstract}
A subspace of matrices over $\F_{q^e}^{m\times n}$ can be naturally embedded as a subspace of matrices in $\F_q^{em\times en}$ with the property that the rank of any of its matrix is a multiple of $e$. It is quite natural to ask whether or not all  subspaces of matrices with such a property arise from a subspace of matrices over a larger field. In this paper we explore this question, which corresponds to studying divisible codes in the rank metric. 
We determine some cases for which this question holds true, and describe counterexamples by constructing subspaces with this property which do not arise from a subspace of matrices over a larger field.
\end{abstract}

\noindent\textbf{MSC2020:}{ 94B05; 51E22; 94B27; 11T06 }\\
\textbf{Keywords:}{ Divisible codes; rank metric codes; function over finite fields; idealizer; linearized polynomial}

\section{Introduction}

A (linear) code is a subset (subspace) of a vector space equipped with a distance function. A code  $\C$ is said to be \textbf{$e$-divisible} if the distance between any two elements of $\C$ is divisible by $e$. A fundamental goal is the following.

\begin{center}
  {\it Construct or characterise all $e$-divisible codes with chosen parameters with respect to a given metric.}  
\end{center}

In this paper we address this question for the case of codes in the rank metric. Let $q$ be a power of a prime $p$ and $n,m$ be two positive integers.
The set $\mathbb{F}_q^{m\times n}$ can be equipped with the rank metric and any subset (subspace) of $\mathbb{F}_q^{m\times n}$ is called a \textbf{(linear) rank metric code}, we refer to \cite{gorla2018codes, bartz2022rank,polverino2020connections,sheekeysurvey} for more details on rank metric codes and their applications.
Consider a rank metric code $\C$ in $\F_{q^e}^{m\times n}$, for some positive integer $e$. Once we fix an $\mathbb{F}_{q^e}$-basis of $\F_{q^e}^m$ and an $\mathbb{F}_{q^e}$-basis of $\F_{q^e}^n$, $\C$ can be seen also as a set of $\mathbb{F}_{q^e}$-linear map from $\F_{q^e}^n$ in $\F_{q^e}^m$. 
Clearly, both $\F_{q^e}^m$ and $\F_{q^e}^n$ may be seen as $\F_q$-vector spaces and hence the maps of $\C$ also define $\F_q$-linear maps from $\F_{q^e}^n$ in $\F_{q^e}^m$, seen as $\fq$-vector spaces, i.e. the elements of $\C$ can be embedded in $\fq^{em \times en}$ as a rank metric code, let say $Em(\mathcal{C})$.
Clearly, the map $Em$ depends on the choice of the bases of the vector spaces involved.
This embedding preserves the rank of the elements of $\C$ up to multiplying by $e$, i.e. if $A \in \C$ has rank $j$, then the associated matrix via $Em$ will have rank $ej$.
In this way, all the elements of $Em(\mathcal{C})$ have rank divisible by $e$, and thus $\C$ is an \textbf{$e$-divisible rank metric code}.

Divisible codes in the Hamming metric have been deeply studied for their connection with interesting geometrical objects (see e.g. \cite{ball2007linear,landjev2019divisible,kurz2021generalization}), codes of nodal surfaces (see e.g. \cite{barth2007cusps}), subspace covering (see e.g. \cite{etzion2014covering,etzion2011q}), $q$-analogs of group divisible designs (see e.g. \cite{buratti2018q}) and very recently used for quantum computations \cite{hu2022divisible}.
Very few classification results are known, see e.g. \cite{betsumiya2012triply}.
Moreover, in the Hamming metric, $e$-divisible codes can be easily constructed using {\it repetition codes}, and classification results focus on determining whether or not all $e$-divisible codes are repetition codes, or arise from a short list of known constructions. In the Hamming metric, viewing codes over $\F_{q^e}$ as codes over $\fq$ does not lead to $e$-divisible codes (in fact, this process is known as {\it concatenation}, and is a topic of research in its own right), whereas repetition codes in the rank metric do not behave as well as those in the Hamming metric (in particular, in the rank metric some linearity properties are  not always preserved, see Remark \ref{rk:repetitionrank}); this is an important distinction between these two metrics.
For a complete survey on divisible codes in the Hamming metric we refer to \cite{kurz2021divisible}

The natural question that arises in the rank metric is whether or not every $e$-divisible rank metric code $\C$ in $\mathbb{F}_q^{m'\times n'}$ \textbf{arises from a rank metric code in} $\F_{q^e}^{m\times n}$ (or simply, \textbf{over $\F_{q^e}$}), i.e. if there exists a rank metric code $\C'$ in $\F_{q^e}^{m\times n}$ for which $\C=Em(\C')$.
As it is formulated, the above question has a negative answer. Indeed, the $\fq$-subspace of alternating matrices (that is, matrices $A$ such that $A^\top = -A$ and having zeros on the main diagonal) in $\fq^{m\times m}$ is $2$-divisible (as every alternating matrix has even rank), but if $m$ is odd it cannot arise from a rank metric code in $\F_{q^2}^{m/2\times m/2}$.
However, it seems that there are no counterexamples when considering codes with a \emph{greater linearity}. Consider a rank metric code $\C$ in $\fq^{m\times n}$, the set 
\[ L(\C)=\{ X \in \fq^{m\times m} \colon XA \in \mathcal{C}, \,\,\, \text{for any}\,\,\,  A \in \C \} \]
is called the \textbf{left idealiser} of $\C$, see \cite{liebhold2016automorphism,lunardon2018nuclei}.
In the case in which $(L(\C),+,\cdot)$ contains a subring $\mathcal{F}$ isomorphic to $\mathbb{F}_{q^m}$, where $+$ and $\cdot$ are the sum and the product of matrices, respectively, then we say that $\C$ is $\F_{q^m}$-\textbf{linear}. This is due to the fact that $\C$ turns out to be an $\mathcal{F}$-left vector space.
So, we can now state the question we are going to investigate in this paper.

\begin{question}\label{question}
Is it true that every $\F_{q^m}$-linear $e$-divisible rank metric code $\C$ in $\fq^{m\times n}$ arises from a rank metric code in $\F_{q^e}^{m/e\times n/e}$?
\end{question}

In this paper we answer positively to  Question \ref{question} in the case in which $n$ is a multiple of $m$ (see Section \ref{sec:3}).
The main tool regards the view of such a code as a subspace of linearized polynomials and the geometric version of these codes, together with the well-celebrated result on the number of directions determined by function in one and more variables (see Section \ref{sec:directions}). 
While in Section \ref{sec:negative} we describe examples of rank metric codes which will give a negative answer to Question \ref{question} under suitable assumptions on the parameters. This is done using the geometric approach of $q$-systems and using a tower of extension fields of a finite field.




\section{Preliminaries} 
\subsection{Directions of functions}\label{sec:directions}

Let $\mathrm{AG}(n,q)$ be the affine $n$-dimensional space and let $H_{\infty}$ be its hyperplane at infinity.
Consider a pointset $S$ in $\mathrm{AG}(n,q)$, then the set of directions determined by $S$ is 
\[ D=\{ \langle P,Q \rangle \cap H_{\infty} \colon P,Q \in S, \text{with}\,\, P\ne Q \}. \]

In the case in which $S$ coincides with the graph of a univariate function from $\fq$ to $\fq$, there is the following well celebrated result.

\begin{theorem}(see \cite{blokhuis1999number,ball2008graph}) \label{th:directions}
Let $f$ be a function from $\fq$ to $\fq$, $q=p^h$, and let $N$ be the number of directions determined by $f$.
Let $s=p^e$ be maximal such that any line with a direction determined by $f$ that is incident with a point of the graph of $f$ is incident with a multiple of $s$ points of the graph of $f$. 
Then one of the following holds:
\begin{itemize}
    \item $s=1$ and $(q+3)/2\leq N \leq q+1$;
    \item $e\mid h$, $q/s+1\leq N\leq (q-1)/(s-1)$;
    \item $s=q$ and $N=1$.
\end{itemize}
Moreover, if $s>2$ then the graph of $f$ is $\F_s$-linear.
\end{theorem}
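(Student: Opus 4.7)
The plan is to prove this via the \emph{Rédei polynomial} machinery, which is the classical tool behind all results of this type. I would first associate to $f$ the polynomial
\[
R(X,Y) = \prod_{a \in \fq}\bigl(X + aY - f(a)\bigr) \in \fq[X,Y],
\]
and expand it as $R(X,Y) = X^q + \sigma_1(Y) X^{q-1} + \cdots + \sigma_q(Y)$, noting that $\deg \sigma_j(Y) \le j$. The central observation is that a slope $m \in \fq$ is \emph{not} determined by $f$ precisely when the map $a \mapsto f(a) - am$ is a permutation of $\fq$; equivalently $R(X,m) = X^q - X$. Hence, writing $X^q - X = X^q + c_1 X^{q-1} + \cdots + c_q$ (so $c_j = 0$ except $c_{q-1} = -1$), each polynomial $\sigma_j(Y) - c_j$ vanishes at every non-determined slope, i.e.\ at $q + 1 - N$ points of $\fq$.

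From this the degree argument of Rédei gives the first reduction: whenever $j < q + 1 - N$, the polynomial $\sigma_j$ must coincide identically with $c_j$, forcing many coefficients of $R$ to vanish. The second step is to pass from this identity among the $\sigma_j$ to a $p$-adic statement. Using Lucas' theorem on binomial coefficients modulo $p$, one analyses which monomials $X^i Y^k$ can actually appear in $R(X,Y)$, and concludes that the support of $R$ is confined to pairs $(i,k)$ whose digits in base $p$ satisfy a nontrivial compatibility. This is how the parameter $s = p^e$ enters and where the divisibility $e \mid h$ emerges.

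Next I would translate this algebraic vanishing into the geometric multiplicity statement. A straightforward double count shows that if $\sigma_j \equiv c_j$ for all $j < q+1-N$, then along every line in a determined direction the number of intersections with the graph of $f$ satisfies a congruence mod $s$; coupling this with the partition of the graph by parallel lines gives the quantitative bounds $q/s + 1 \le N \le (q-1)/(s-1)$ in the middle range, and the classical Rédei-Megyesi bound $(q+3)/2 \le N$ in the trivial range $s = 1$.

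The hard part, and the main obstacle, is the final clause asserting $\F_s$-linearity of the graph when $s > 2$. Here one must upgrade the Frobenius-compatible support condition on $R$ into the statement that the graph of $f$ is an actual $\F_s$-subspace of $\AG(2,q)$. I would follow the Blokhuis-Ball-Brouwer-Storme-Szőnyi route: show that $R(X,Y)$ factors through the Frobenius $(X,Y) \mapsto (X^s, Y^s)$ after an affine change of coordinates, which forces $f$ to be an $\F_s$-linear function of $x$ up to translation. Making this factorization rigorous---in particular excluding spurious components---requires a careful second application of the root-counting argument, and is where the proof becomes most delicate. The higher-dimensional version in \cite{ball2008graph} uses the same skeleton but replaces the Rédei polynomial by its multivariate analogue and replaces Lucas' theorem by a more intricate combinatorial analysis of digit patterns.
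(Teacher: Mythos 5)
First, a point of comparison: the paper does not prove Theorem \ref{th:directions} at all --- it is quoted from the literature (\cite{blokhuis1999number,ball2008graph}) --- so there is no internal proof to measure yours against; the only fair benchmark is those sources. Your proposal does correctly identify the machinery used there: the R\'edei polynomial $R(X,Y)=\prod_{a\in\fq}\bigl(X+aY-f(a)\bigr)$, the fact that $R(X,m)=X^q-X$ exactly at the non-determined slopes $m$, and the consequent vanishing of $\sigma_j-c_j$ at those slopes. That opening reduction is essentially right, up to an off-by-one: the graph of a function determines no vertical direction, so $\sigma_j-c_j$ vanishes at the $q-N$ non-determined elements of $\fq$, giving $\sigma_j\equiv c_j$ for $j<q-N$ rather than $j<q+1-N$.

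The problem is that everything past this reduction is announced rather than carried out, and the announced steps are exactly where the theorem lives. Concretely: (i) the passage from ``$\sigma_j\equiv c_j$ for small $j$'' to the existence of the parameter $s=p^e$ with $e\mid h$, to the congruence of line intersections modulo $s$, and to the two-sided bound $q/s+1\le N\le(q-1)/(s-1)$ is the core of the lacunary-polynomial analysis (the structure of fully reducible polynomials $X^q+g(X)$ with $\deg g$ small); calling it ``a straightforward double count'' substantially understates it, and no argument is actually given. (ii) The final clause --- $\F_s$-linearity of the graph for $s>2$ --- is the part of the theorem this paper actually relies on (in Proposition \ref{prop:case2} and Theorem \ref{th:mxmqelinear}), and you explicitly leave it open: the claimed factorization of $R$ through the Frobenius is a plausible slogan, but the deduction that this forces $f$ to be $\F_s$-linear up to translation is precisely what needs proof, and excluding the ``spurious components'' you mention is not a routine matter. (iii) The statement as quoted holds for all $p$ and all admissible $s$; obtaining it in this generality required refinements beyond the original 1999 argument, so even a faithful execution of the sketch you outline would not by itself deliver the theorem as stated. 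In short, the proposal is a reasonable table of contents for the known proof, but it is not a proof.
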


An extension of Theorem \ref{th:directions} to higher dimensions is the following from \cite{storme2001linear} (see also \cite{ball2008graph}).

\begin{theorem} \label{th:dirmoreind} (see \cite[Theorem 9, Corollary 10]{storme2001linear})
Let $q=p^h$.
Let $S$ be a subset of $\mathrm{AG}(n,q)$, $|S|=q^{n-1}$ and let $D\subseteq H_{\infty}$ be the set of direction determined by $S$. Suppose that
\[|D| \leq \frac{q+3}2 q^{n-2}+q^{n-3}+\ldots+q^2+q.\]
Then for any line $l$ either
\begin{itemize}
    \item[(i)] $|S\cap l|=1$ (if and only if $l\cap H_{\infty} \notin D$), or
    \item[(ii)] $|S\cap l|\equiv 0 \pmod{p^{e_l}}$, for some $e_l\mid h$.
\end{itemize}
Moreover, $S$ is $\F_{q^e}$-linear, where $e$ is the greatest common divisor of the values $e_l$.
\end{theorem}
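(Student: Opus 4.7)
The plan is to reduce the higher-dimensional statement to the univariate result of Theorem~\ref{th:directions} via two-dimensional slices, and then to assemble the per-line information into a global $\F_{q^e}$-linear structure. The hypothesis on $|D|$ implies $|D| < q^{n-1} = |H_\infty|$, so one may pick a direction $P_\infty \in H_\infty \setminus D$ and perform an affine change of coordinates identifying $P_\infty$ with the $x_n$-axis direction. Since no line parallel to $P_\infty$ is a secant to $S$, and since $|S| = q^{n-1}$, the set $S$ is the graph of a function $f \colon \F_q^{n-1} \to \F_q$.

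Next, for each line $\ell$ in $\AG(n,q)$ meeting $S$ in at least two points, embed $\ell$ in the unique $2$-flat $\Pi$ through $\ell$ containing the direction $P_\infty$. Then $S \cap \Pi$ is the graph of the restriction $f_\Pi \colon \F_q \to \F_q$ of $f$ to the image of $\Pi$ under the vertical projection. The number of directions determined by $f_\Pi$ is bounded by the number of points of $D$ on the line at infinity of $\Pi$. A double counting argument, summing these quantities over the $q^{n-2}$ parallel $2$-flats in each direction class containing $P_\infty$, together with the precise shape of the bound $\frac{q+3}{2}q^{n-2} + q^{n-3} + \cdots + q$, forces $f_\Pi$ to determine at most $(q+1)/2$ directions, placing it in the structured regime of Theorem~\ref{th:directions}. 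Hence $f_\Pi$ is $\F_{s}$-linear for some $s = p^{e_\ell} \geq 2$ with $e_\ell \mid h$, and consequently $|S \cap \ell| \equiv 0 \pmod{p^{e_\ell}}$.

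Combined with the observation that any line $\ell$ whose direction does not lie in $D$ meets $S$ in at most one point, this yields the dichotomy in (i) and (ii). To deduce that $S$ is globally $\F_{q^e}$-linear with $e = \gcd(e_\ell)$, translate so that $0 \in S$. For each secant $\ell$ through $0$, the intersection $S \cap \ell$ is an additive subgroup of $\ell \cong \F_q$ stable under $\F_{p^{e_\ell}}$-scalar multiplication, hence an $\F_{p^{e_\ell}}$-subspace. Since this holds for every secant through $0$, the natural $\F_{q^e}$-action preserves $S$, and a short inductive argument comparing triples of collinear points of $S$ upgrades this to a full $\F_{q^e}$-vector space structure.

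The main obstacle is the averaging step: the bound on $|D|$ must be exploited sharply enough to place \emph{every} secant 2-slice into the structured regime, since a weaker bound would only guarantee this for most slices, leaving a residue of lines where divisibility could fail. The lower-order terms $q^{n-3} + \cdots + q$ are essential for absorbing the contributions of 2-flats whose line at infinity meets $D$ generically, and the leading coefficient $(q+3)/2$ corresponds precisely to the threshold separating the trivial and non-trivial regimes in Theorem~\ref{th:directions}. A secondary subtlety is the ascent from per-line divisibility to global $\F_{q^e}$-linearity, which requires matching the individual $\F_{p^{e_\ell}}$-structures on the various secants through the origin into a single coherent $\F_{q^e}$-action.
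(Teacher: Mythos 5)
This statement is not proved in the paper at all: it is quoted verbatim from \cite[Theorem 9, Corollary 10]{storme2001linear}, so there is no internal proof to compare yours against, and I have to judge your sketch on its own merits. It contains a genuine gap at its central step, namely the claim that the hypothesis on $|D|$, via ``a double counting argument, summing over the $q^{n-2}$ parallel $2$-flats in each direction class'', forces \emph{every} vertical $2$-flat $\Pi$ into the structured regime of Theorem \ref{th:directions}. All $q^{n-2}$ flats in a fixed parallel class share the same line at infinity $m$ through $P_\infty$, and the directions they determine all land in the single set $D\cap m$, which has at most $q$ elements; since $|D|=\sum_{m\ni P_\infty}|D\cap m|$, an entire parallel class of unstructured slices can raise $|D|$ by at most $q$, which is invisible against the threshold $\frac{q+3}{2}q^{n-2}+\cdots+q$. (Summing $|D_{f_\Pi}|$ over the individual slices instead of taking the union overcounts and yields no lower bound on $|D|$ whatsoever.) So no averaging of the kind you describe can exclude a minority of bad slices, and handling exactly those slices is the entire content of the theorem, which you have in effect assumed.

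That this difficulty is real, and the bound razor-thin, is shown by the cylinder $S=\{(x_1,\ldots,x_{n-1},g(x_1))\colon x_i\in\F_q\}$ with $q=p$ prime, $n\ge 3$ and $g(x)=x^{(p+1)/2}$, which determines exactly $(q+3)/2$ directions and is unstructured. One computes $|D|=\frac{q+3}{2}q^{n-2}+q^{n-3}+\cdots+q+1$, exceeding the hypothesis by exactly $1$, while almost every vertical slice is an unstructured copy of the graph of $g$ and there are secant lines meeting $S$ in $(p+1)/2$ points, violating both (i) and (ii). Any correct proof must therefore exploit the bound globally and with full precision; in \cite{storme2001linear} this is done with R\'edei-polynomial and lacunary-polynomial techniques in several variables, not by a slice-by-slice reduction to Theorem \ref{th:directions}. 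Two secondary issues: $|H_\infty|=(q^n-1)/(q-1)$, not $q^{n-1}$ (a non-determined direction still exists, so this is harmless); and the final ascent from per-line $\F_{p^{e_l}}$-structure to global $\F_{p^e}$-linearity with $e=\gcd(e_l)$ needs additive closure of $S-s_0$, not merely closure of each secant section under scalars, so that step also requires a genuine argument rather than the one-line gloss you give it.
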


\subsection{Dual of $\fq$-subspaces}\label{sec:dual}

Let $V$ be an $\F_{q^m}$-vector space of dimension $k$ and let $\sigma \colon V \times V \rightarrow \F_{q^m}$ be a nondegenerate reflexive sesquilinear form on the $k$-dimensional $\F_{q^m}$-vector space $V$ and consider \[
\begin{array}{cccc}
    \sigma': & V \times V & \longrightarrow & \F_q  \\
     & (x,y) & \longmapsto & \mathrm{Tr}_{q^m/q} (\sigma(x,y)).
\end{array}
\] 
So, $\sigma'$ is a nondegenerate reflexive sesquilinear form on $V$ seen as an $\fq$-vector space of dimension $km$. Then we may consider $\perp$ and $\perp'$ as the orthogonal complement maps defined by $\sigma$ and $\sigma'$, respectively. For an $\F_q$-subspace $U$ of  $V$ of dimension $n$, the $\F_q$-subspace $U^{\perp'}$ is the \textbf{dual} (with resepct to $\sigma'$) of $U$, which has dimension $km-n$.

An important property that $\sigma'$ satisfies is that the dual of an $\F_{q^m}$-subspace $W$ of $V$ is an $\F_{q^m}$-subspace as well and $W^{\perp'}=W^\perp$. Moreover, the following result will be widely used in the paper.

\begin{proposition} (see \cite[Property 2.6]{polverino2010linear}) \label{prop:weightdual}
Let $U$ be an $\fq$-subspace of $V$ and $W$ be an $\F_{q^m}$-subspace of $V$.
Then 
\[ \dim_{\fq}(U^{\perp'}\cap W^{\perp})=\dim_{\fq}(U\cap W)+\dim_{\fq}(V)-\dim_{\fq}(U)-\dim_{\fq}(W). \]
\end{proposition}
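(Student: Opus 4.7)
The plan is to derive the claimed identity from Grassmann's dimension formula applied to the two $\F_q$-subspaces $U^{\perp'}$ and $W^{\perp}$, using the nondegeneracy of $\sigma'$ and the crucial identity $W^{\perp'}=W^{\perp}$ that holds because $W$ is an $\F_{q^m}$-subspace.

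First I would record the dimensions. Since $\sigma'$ is a nondegenerate $\F_q$-bilinear form on the $km$-dimensional $\F_q$-space $V$, for any $\F_q$-subspace $X\subseteq V$ we have $\dim_{\fq}(X^{\perp'})=km-\dim_{\fq}(X)$ and $(X^{\perp'})^{\perp'}=X$. Applying this to $U$ and to $W$ gives
\[
\dim_{\fq}(U^{\perp'})=km-\dim_{\fq}(U),\qquad \dim_{\fq}(W^{\perp})=\dim_{\fq}(W^{\perp'})=km-\dim_{\fq}(W),
\]
where we used the stated fact that for the $\F_{q^m}$-subspace $W$ the two notions of orthogonal coincide, $W^{\perp}=W^{\perp'}$.

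Next I would compute the $\perp'$-dual of the sum $U^{\perp'}+W^{\perp}$. Using that $\perp'$ is an involution on $\F_q$-subspaces and swaps sums with intersections, together with $W^{\perp}=W^{\perp'}$, we obtain
\[
\bigl(U^{\perp'}+W^{\perp}\bigr)^{\perp'}=(U^{\perp'})^{\perp'}\cap(W^{\perp'})^{\perp'}=U\cap W,
\]
hence $\dim_{\fq}(U^{\perp'}+W^{\perp})=km-\dim_{\fq}(U\cap W)$.

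Finally I would apply Grassmann's formula to $U^{\perp'}$ and $W^{\perp}$:
\[
\dim_{\fq}(U^{\perp'}\cap W^{\perp})=\dim_{\fq}(U^{\perp'})+\dim_{\fq}(W^{\perp})-\dim_{\fq}(U^{\perp'}+W^{\perp}).
\]
Substituting the three dimensions computed above yields
\[
\dim_{\fq}(U^{\perp'}\cap W^{\perp})=\bigl(km-\dim_{\fq}(U)\bigr)+\bigl(km-\dim_{\fq}(W)\bigr)-\bigl(km-\dim_{\fq}(U\cap W)\bigr),
\]
which simplifies to $\dim_{\fq}(U\cap W)+km-\dim_{\fq}(U)-\dim_{\fq}(W)$, and this is the claim since $\dim_{\fq}(V)=km$. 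There is no real obstacle here; the only point requiring care is to consistently distinguish $\perp$ from $\perp'$, and in particular to exploit the identity $W^{\perp}=W^{\perp'}$ so that the $\perp'$-duality machinery can be applied uniformly to both $U$ and $W$.
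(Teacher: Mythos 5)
Your proof is correct, and the paper itself offers no proof of this proposition — it is quoted from \cite[Property~2.6]{polverino2010linear} — so there is nothing to diverge from; your argument (dimension of $\perp'$-duals, $W^{\perp}=W^{\perp'}$, duality exchanging sums and intersections, then Grassmann) is exactly the standard computation behind the cited result. The only cosmetic point is that $\sigma'$ is reflexive sesquilinear rather than bilinear, but nondegeneracy and reflexivity are all your steps actually use, so the argument stands as written.
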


\section{Rank metric codes}
Rank metric codes over finite fields can be represented in different setting. Let start by recalling the matrix framework for the rank metric. 

\subsection{Matrix codes}

On the space of $m \times n$ matrices over $\F_q$, we define the \textbf{rank distance} as the function 
\[
\mathrm{d} : \F_q^{m \times n} \times \F_q^{m \times n} \longrightarrow \mathbb{N}  
\]
defined by
\[
\mathrm{d}(X,Y) = \mathrm{rk}(X - Y),
\]
with $X,Y \in \F_q^{m \times n}$. 

We define the \textbf{rank weight} of an element $X \in \F_q^{m \times n}$ as 
$$\mathrm{w}(X):= \mathrm{rk}(X).$$
Clearly, $\mathrm{d}(X,Y)= \mathrm{w}(X-Y)$, for every $X,Y \in \F_q^{m \times n}$. \\
A \textbf{(linear matrix) rank metric code} $C$ is an $\F_q$-linear subspace of $\F_q^{m \times n}$ endowed with the rank distance.
The \textbf{minimum rank distance} of a rank metric code $C$ is defined as usual via $$\mathrm{d}(C)=\min\{\mathrm{w}(X): X \in C, X \neq 0\}.$$ 
We say that two linear rank metric codes $\C,\C' \subseteq \F_q^{m\times n}$ are \textbf{equivalent} if  there exist $X \in \GL(m, q)$, $Y \in \GL(n, q)$ and a field automorphism $\sigma$ of $\F_q$ such that
\[
\mathcal{C}'=X \mathcal{C}^{\sigma}Y =\{X A^{\sigma} Y : A \in \C\}.
\]

Therefore, we define a linear rank metric code $\C$ to be \textbf{$e$-divisible} if all the elements of $\C$ have rank a multiple of $e$. 

\begin{remark}\label{rk:repetitionrank}
Another way to construct $e$-divisible rank metric code is the following:
consider $\C'$ any linear rank metric code in $\mathbb{F}_q^{m\times m}$. Define the rank metric code $\C$  as the subset of $\mathbb{F}_q^{em\times em}$ whose elements are block diagonal matrices such that the main-diagonal blocks consist of the same element of $\C$, that is
\[ \C=\{\mathrm{diag}(C,\ldots,C) \colon C \in \C'\}. \]
The rank metric code $\C$ is clearly $e$-divisible. 
Note that if we consider $\C'\subseteq \F_q^{m\times m}$ where $\C'$ contains an element $\overline{C}$ of $\mathrm{GL}(m,q)$, then it is easy to see that
\begin{equation}\label{eq:leftidea} L(\C)=\{ \mathrm{diag}(A,\ldots,A) \colon A \in L(\C') \}. \end{equation}
Indeed, let $B$ be any matrix in $L(\C)$ and write it as a block matrix where the blocks are denoted by $B_{i,j}$ and are elements in $\F_q^{m\times m}$. Then
\[ B \cdot \mathrm{diag}(C,\ldots,C) = \left( 
\begin{matrix}
B_{1,1} C & B_{1,2} C & \ldots & B_{1,e} C\\
\vdots & \vdots & \vdots & \vdots \\
B_{e,1} C & B_{e,2} C & \ldots & B_{e,e} C\\
\end{matrix} \right)= \mathrm{diag}(C',\ldots,C'),  \]
for any $C \in \C'$ and some $C' \in \C'$.
Choose $C$ as $\overline{C}$ and denote by $O$ the zero matrix in $\C'$, then we have that 
\[ B_{i,j} \overline{C}=O, \]
for any $i,j \in \{1,\ldots,e\}$ with $i\ne j$, which implies that $B_{i,j}=O$ as $\overline{C}$ is invertible.
Moreover, we have 
\[ B_{i,i} \overline{C}=C', \]
for any $i$ and hence $B_{1,1}=\ldots=B_{e,e}$.
So, the matrices in $L(\C)$ are of the form $\mathrm{diag}(B,\ldots,B)$, and moreover $\mathrm{diag}(B,\ldots,B) \in L(\C)$ if and only if $BC \in \C'$ for every $C \in \C'$ and hence \eqref{eq:leftidea}.
Clearly, $L(\C)$ cannot contain a subring isomorphic to a proper extension of $\mathbb{F}_{q^{m}}$ since $L(\C')\simeq L(\C)$ and $L(\C')\subseteq \F_q^{m\times m}$. In particular, $\C$ cannot be $\mathbb{F}_{q^{me}}$-linear.
This gives us another important motivation to restrict our study to the case of rank metric codes with a larger linearity.
\end{remark}

\subsection{Vector codes}

We can also equivalently describe rank metric codes in terms of vectors.

The rank (weight) $w_q(v)$ of a vector $v=(v_1,\ldots,v_n) \in \F_{q^m}^n$ is the dimension of the vector space generated over $\F_q$ by its entries, i.e, $w_q(v)=\dim_{\fq} (\langle v_1,\ldots, v_n\rangle_{\fq})$. 

A \textbf{(linear vector) rank metric code} $\C $ is an $\F_{q^m}$-subspace of $\F_{q^m}^n$ endowed with the rank distance defined as
\[
d_q(x,y)=w_q(x-y),
\]
where $x, y \in \F_{q^m}^n$. 
If $q$ is clear from the context, we will just write $w,d$ in place of $w_q, d_q$.

Let $\C \subseteq \F_{q^m}^n$ be a rank metric code. We will write that $\C$ is an $[n,k,d]_{q^m/q}$ code (or $[n,k]_{q^m/q}$ code) if $k$ is the $\F_{q^m}$-dimension of $\C$ and $d$ is its minimum distance, that is 
\[
d=\min\{d(x,y) \colon x, y \in \C, x \neq y  \}.
\]




The matrix and vector framework for the rank metric described above are related in the following way. Let $\tilde{\Gamma}_q=(\gamma_1,\ldots,\gamma_m)$ be an ordered $\fq$-basis of $\F_{q^m}$. Given $x \in \F_{q^m}^n$, define the element $\Gamma_q(x) \in \F_q^{m \times n}$, where
$$x_{i} = \sum_{j=1}^m \Gamma_q (x)_{ji}\gamma_j, \qquad \mbox{ for each } i \in \{1,\ldots,n\}.$$
In other words, $\Gamma_q(x)$ is the matrix expansion (by columns) of the vector $x$ with respect to the $\fq$-basis $\tilde{\Gamma}_q$ of $\F_{q^m}$.
The map 
$$\Gamma_q: \F_{q^m}^n \rightarrow \F_q^{m\times n}$$
is an $\fq$-linear isometry between the metric spaces $(\F_{q^m}^n, d)$ and $(\F_q^{m \times n},\mathrm{d})$. 
Another important property of the map $\Gamma_q$ is that it keeps track of the $\F_{q^m}$-linearity.

\begin{proposition}
Let $\tilde{\Gamma}_q$ be an ordered $\fq$-basis of $\F_{q^m}$. If $\C$ is an $[n,k]_{q^m/q}$-code then $\C'=\Gamma_q(\C)$ is an $\F_{q^m}$-linear rank metric code in $\F_{q}^{m \times n}$. Conversely, let $\C$ be an $\F_{q^m}$-linear rank metric code in $\F_{q}^{m \times n}$ of dimension $mk$. Then there exists a code $\C'$ equivalent to $\C$ such that $\Gamma_q^{-1}(\C')$ is an $[n,k]_{q^m/q}$ code.
\end{proposition}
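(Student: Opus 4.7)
The proposition has two directions. I would handle them separately, and the common toolkit is the matrix of multiplication-by-$\alpha$ on $\fq^m$.

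For any ordered $\fq$-basis $\tilde{\Gamma}_q=(\gamma_1,\ldots,\gamma_m)$ of $\F_{q^m}$ and any $\alpha\in\F_{q^m}$, the $\fq$-linear map $\mu_\alpha\colon x\mapsto \alpha x$ on $\F_{q^m}$ is represented by some $M_\alpha\in\fq^{m\times m}$. The key computation I would carry out (one short line) is that the entrywise definition of $\Gamma_q$ gives the intertwining identity $\Gamma_q(\alpha v)=M_\alpha\,\Gamma_q(v)$ for all $v\in\F_{q^m}^n$. The set $\mathcal{F}_0=\{M_\alpha:\alpha\in\F_{q^m}\}$ is then a subring of $\fq^{m\times m}$ isomorphic to $\F_{q^m}$ via $\alpha\mapsto M_\alpha$.

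For the forward direction, I would take an $[n,k]_{q^m/q}$ code $\C$ and set $\C'=\Gamma_q(\C)$. Since $\Gamma_q$ is $\fq$-linear and bijective, $\C'$ is an $\fq$-subspace of $\fq^{m\times n}$ of dimension $mk$. Using the intertwining identity, for any $\alpha\in\F_{q^m}$ and any $A\in\C'$ (write $A=\Gamma_q(v)$) we have $M_\alpha A=\Gamma_q(\alpha v)\in\C'$ because $\alpha v\in\C$ by $\F_{q^m}$-linearity. Hence $\mathcal{F}_0\subseteq L(\C')$, and $\C'$ is $\F_{q^m}$-linear by definition.

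For the converse, I would start with an $\F_{q^m}$-linear rank metric code $\C\subseteq\fq^{m\times n}$ of $\fq$-dimension $mk$, so there is a subring $\mathcal{F}\subseteq L(\C)$ with $\mathcal{F}\cong\F_{q^m}$. The inclusion $\mathcal{F}\hookrightarrow\fq^{m\times m}$ endows $\fq^m$ with the structure of an $\F_{q^m}$-module; counting $\fq$-dimensions, $\fq^m$ is one-dimensional over $\F_{q^m}$, and hence isomorphic as an $\F_{q^m}$-module to the standard one coming from $\mathcal{F}_0$. Translating this module isomorphism into matrices produces $X\in\GL(m,q)$ with $X\mathcal{F}X^{-1}=\mathcal{F}_0$. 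Setting $\C'=X\C$, which is equivalent to $\C$, I check that $\mathcal{F}_0\subseteq L(\C')$: any $M_\alpha$ equals $XBX^{-1}$ for some $B\in\mathcal{F}$, and $M_\alpha\C'=XB\C\subseteq X\C=\C'$.

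Finally, I would let $D=\Gamma_q^{-1}(\C')$. It is an $\fq$-subspace of $\F_{q^m}^n$ of $\fq$-dimension $mk$, and for $v\in D$ and $\alpha\in\F_{q^m}$ the intertwining identity gives $\Gamma_q(\alpha v)=M_\alpha\Gamma_q(v)\in\C'$, so $\alpha v\in D$. Thus $D$ is an $\F_{q^m}$-subspace of $\F_{q^m}^n$ of dimension $k$, i.e.\ an $[n,k]_{q^m/q}$ code. The only nontrivial step is the conjugacy statement $X\mathcal{F}X^{-1}=\mathcal{F}_0$: this is the main obstacle, but it follows from the elementary fact that any two faithful $\F_{q^m}$-module structures on $\fq^m$ are isomorphic, a one-line consequence of dimension counting.
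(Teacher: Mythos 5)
Your forward direction and the overall architecture of the converse (conjugate the copy of $\F_{q^m}$ in $L(\C)$ onto the standard one $\mathcal{F}_0=\mathcal{A}$, replace $\C$ by $X\C$, then pull back through $\Gamma_q$ using the intertwining identity) coincide with the paper's proof. Where you genuinely diverge is in how you obtain the conjugating matrix: the paper invokes the group-theoretic fact that $\mathcal{F}\setminus\{0\}$ and $\mathcal{A}\setminus\{0\}$ are Singer cycles in $\GL(m,q)$ and that any two Singer cycles are conjugate (citing Huppert), whereas you argue module-theoretically that $\fq^m$ carries two one-dimensional $\F_{q^m}$-vector space structures which must be isomorphic. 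Your route is more self-contained, but be aware of one subtlety you are sweeping under ``translating this module isomorphism into matrices produces $X\in\GL(m,q)$'': a module isomorphism built from a ring isomorphism $\theta\colon\mathcal{F}\to\mathcal{F}_0$ via $Bv_0\mapsto\theta(B)w_0$ is a priori only additive, hence $\F_p$-linear, and you need it to be $\fq$-linear to land in $\GL(m,q)$ rather than $\GL(hm,p)$. This is repaired by first noting that the scalar matrices $\fq I$ lie in $\mathcal{F}$: since $\fq^m$ is one-dimensional over $\mathcal{F}$, the centraliser of $\mathcal{F}$ in $\fq^{m\times m}$ equals $\mathcal{F}$ itself, and $\fq I$ centralises $\mathcal{F}$, so $\fq I$ is the (unique) subfield of order $q$ of $\mathcal{F}$; one may then choose $\theta$ to be the identity on $\fq I$, which makes the resulting $X$ honestly $\fq$-linear. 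With that one-line addition your argument is complete and gives a proof of the Singer-cycle conjugacy the paper takes as a citation.
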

\begin{proof}
For an element $\alpha \in \F_{q^m}$, define
\[
\begin{array}{rrcl}
     \tau_{\alpha} : & \F_{q^m} & \longrightarrow & \F_{q^m}  \\
     & x & \longmapsto & \alpha x. 
\end{array}
\]
The set $\mathcal{A}$ of the matrices in $\F_q^{m \times m}$ associated with $\tau_{\alpha}$ with respect to the basis $\tilde{\Gamma}_q$, for every $\alpha \in \F_{q^m}$, forms a ring isomorphic to $\F_{q^m}$.
First, consider $\C'=\Gamma_q(\C)$ then we have to prove that $L(\C')$ contains a subring isomorphic to $\F_{q^m}$. 
We start by observing that 
\begin{equation}\label{eq:Gammaqlin}
  \Gamma_q(\alpha c) = A \Gamma_q(c),  
\end{equation}
for all $\alpha \in \F_{q^m}$ and $c \in \C$, where $A$ is matrix in $\F_q^{m \times m}$ associated with $\tau_{\alpha}$ with respect to the basis $\Gamma_q$. Now, since $\alpha c \in \C$ for all $\alpha \in \F_{q^m}$ and $c \in \C$, we get that $\mathcal{A} \subseteq L(\C')$ and the first part of the assertion is proved. Conversely, suppose that $\C$ is an $\F_{q^m}$-linear rank metric code in $\F_{q}^{m \times n}$ and let $\mathcal{G}$ be a subring of $L(\C')$ isomorphic to $\F_{q^m}$. Then $\mathcal{G} \setminus  \{0\}$ and $\mathcal{A} \setminus  \{0\}$ define two Singer cycles in $(\GL(m,q),\cdot)$, and since two Singer cycles are conjugate (see \cite[pag. 187]{huppert2013endliche} and \cite[Section 1.2.5 and Example 1.12]{hiss2011finite}), then there exists an invertible matrix $H \in \mathrm{GL}(m,q)$ such that 
\[H^{-1}  \mathcal{G} H=\mathcal{A}.\] 
Let $\C'=H^{-1} \C$. It can be easily verified that $L(\C')$ contains $\mathcal{A}$ and therefore $\Gamma_q^{-1}(\C')$ is an $\F_{q^m}$-subspace of $\F_{q^m}^n$. Indeed, let $\alpha\in \F_{q^m}$ and $c \in \Gamma_q^{-1}(\C')$, and denote by $A$ the matrix in $\F_q^{m \times m}$ associated with $\tau_{\alpha}$ with respect to the basis $\tilde{\Gamma}_q$.
In particular, there exists $X \in \C'$ such that $\Gamma_q(c)=X$ and, since $A \in L(\C')$, we also have that
\[A\Gamma_q(c)=AX \in \C'.\]
By Equation \eqref{eq:Gammaqlin}, this implies that $\alpha c\in \Gamma_q^{-1}(\C')$ and hence $\Gamma_q^{-1}(\C')$ is an $[n,k]_{q^m/q}$ code.
\end{proof}

For more details on this two settings, see e.g. \cite{gorla2018codes}. \\

As for the rank metric codes seen as subspaces of matrices, if we consider a linear rank metric code $\C$ in $(\F_{(q^e)^{m}}^n,d_{q^e})$ then $\Gamma_q^{-1}(Em(\Gamma_{q^e} (\C)))$ will be a rank metric code in $(\F_{q^{em}}^{en},d_q)$ whose weights are divisible by $e$. If a rank metric code $\C$ in $(\F_{q^{em}}^{en},d_q)$ can be obtained as $\Gamma_q^{-1}(Em(\Gamma_{q^e} (\C')))$ where $\C'$ is a linear rank metric code in $(\F_{(q^e)^{m}}^n,d_{q^e})$, then we say that $\C$ \emph{arises from a rank metric code in $(\F_{q^{em}}^{n},d_{q^e})$} (or simply, \emph{over $\F_{q^e}$}).

\subsection{Linearized polynomial codes}

In the case $n=m$, there is an alternative way to see the $\fq$-algebra of $n\times n$ matrices as the algebra of  \textbf{linearized polynomials} (or \textbf{$q$-polynomials}). Formally, a linearized polynomial is a polynomial of the shape
$$ f(x):=\sum_{i=0}^{t}f_i x^{q^i}, \quad f_i \in \fqn.$$
If $f$ is nonzero, the $q$-degree of $f$ will be the maximum $i$ such that $f_i \neq 0$.

The set of linearized polynomials forms an $\F_q$-algebra with the usual addition between polynomials and the composition, given by
$$ (f_ix^{q^i})\circ (g_jx^{q^j})=f_ig_j^{q^i}x^{q^{i+j}}, $$
on $q$-monomials, and then extended by associativity and distributivity and the multiplication by elements of $\F_q$. We denote this $\F_q$-algebra by $\cL_{n,q}[x]$.
$\mathcal{L}_{n,q}[x]$ can be considered modulo the two-sided ideal $(x^{q^n}-x)$ and so the elements of this factor algebra are represented by linearized polynomials of $q$-degree less than $n$, i.e.
$$ \tilde{\mathcal{L}}_{n,q}[x] := \left\{\sum_{i=0}^{n-1}f_i x^{q^i}, \quad f_i \in \fqn \right\}.$$
It is well-known that the following isomorphism as $\F_q$-algebra holds
\[ (\cL_{n,q}[x],+,\circ)\cong(\End_{\fq}(\fqn),+,\circ),
\] where the linearized polynomial $f(x)$ is identified with the endomorphism of $\fqn$
$$ \alpha \longmapsto \sum_{i=0}^{n-1}f_i \alpha^{q^i}.$$
Thanks to the above isomorphism, we immediately get that $\cL_{n,q}[x]$ is also isomorphic to the $\fq$-algebra $\fq^{n \times n}$, since $\F_{q^n}$ is an $n$-dimensional $\fq$-vector space. Here, we will speak of \emph{kernel} and \emph{rank} of a $q$-polynomial $f(x)$ meaning by this the kernel and rank of the corresponding endomorphism, denoted by $\ker(f(x))$ and $\rk(f(x))$, respectively. 

This naturally defines the rank metric in $\cL_{n,q}$. Indeed, in this context, a \textbf{linear rank metric code} $\mathcal{C}$ is an $\F_{q}$-subspace of $ \tilde{\mathcal{L}}_{n,q}[x]$ endowed with the rank metric.
Two linear rank metric codes $\C_1, \C_2 \subseteq \tilde{\mathcal{L}}_{n,q}$ are said to be \textbf{equivalent} if there exist two invertible linearized polynomials $g(x), h(x) \in \tilde{\mathcal{L}}_{n,q}$ and a field automorphism $\rho \in \Aut(\F_{q^n})$ such that
\[
\C_1=g(x) \circ \C_2^{\rho} \circ h(x) =\{g(x) \circ {f^{\rho}(x)} \circ h(x) : f(x) \in \C_2\},
\]
where $f^{\rho}(x)=\sum_{i=0}^{n-1}\rho(f_i)x^{q^i}$ if  $f(x)=\sum_{i=0}^{n-1}f_i x^{q^i}$.
In addition, as in the matrix framework we define a notion of idealizers in order to define a notion of linearity in this setting. Consider a linear rank metric code $\C$ in $\tilde{\mathcal{L}}_{n,q}$, the set 
\[ L(\C)=\{ g(x) \in  \tilde{\mathcal{L}}_{n,q} : g(x) \circ f(x) \in \C, \,\,\, \text{for any}\,\,\,   f(x) \in \C \} \]
is called the \textbf{left idealiser} of $\C$. We say that a linear rank metric code $\C$ is an $\F_{q^m}$-\textbf{linear rank metric code} if $(L(\C),+,\circ)$, where $+$ and $\circ$ are the sum and the composition of linearized polynomials, respectively, contains a subring isomorphic to 
\[\mathcal{F}_n=\{\alpha x : \alpha \in \F_{q^n}\} \simeq \mathbb{F}_{q^n}.\]
\begin{remark}
If $\C$ is an $\F_{q^n}$-linear rank metric code in $\tilde{\mathcal{L}}_{n,q}$ then it is equivalent to an $\F_{q^n}$-linear rank metric code $\C'$ that is also an $\F_{q^n}$-subspace of $\tilde{\mathcal{L}}_{n,q}$; similarly to what happen in the vector framework, that is there exists an invertible linearized polynomial $g(x) \in \tilde{\mathcal{L}}_{n,q}$ such that 
\[g^{-1}(x) \circ \mathcal{G}_n \circ g(x)=\mathcal{F}_n,\] 
where $\mathcal{G}_n$ is contained in $L(\C)$ and $\mathcal{G}_n$ is isomorphic to $\mathcal{F}_n$.
Let $\C'=g^{-1}(x) \circ \C$. It can be easily verified that $L(\C')$ contains $\mathcal{F}_n$. It follows that $\C'$ is an $\F_{q^n}$-linear subspace of $\tilde{\mathcal{L}}_{n,q}$.
\end{remark}

We now see the link with the vector model of rank metric codes.
Let $\mathcal{B}=(a_1,\ldots,a_{n})$ be an $\F_q$-basis of $\F_{q^n}$ (seen as an $\fq$-vector space).
The map
\[
\begin{array}{crcl}
        ev_{\mathcal{B}} \colon & \tilde{\mathcal{L}}_{n,q}[x] & \longrightarrow & \F_{q^n}^{n} \\
        &
        f(x) & \longmapsto & (f(a_1),\ldots,f(a_{n})) 
\end{array}
\]
is an $\F_{q^n}$-linear isomorphism which preserves the rank, i.e. $\rk(f(x))=w(ev_{\mathcal{B}}(f(x)))$.

\subsection{Rank metric codes and $q$-systems}\label{sec:qsystems}

Now, we recall the definition of equivalence between rank metric codes in $\F_{q^m}^n$. An $\F_{q^m}$-linear isometry $\phi$ of $\F_{q^m}^n$ is an $\F_{q^m}$-linear map of $\F_{q^m}^{n}$ that preserves the distance, i.e. $w(x)=w(\phi(x))$, for every $x \in  \F_{q^m}^{n}$, or equivalently $d(x,y)=d(\phi(x),\phi(y))$, for every $x,y \in  \F_{q^m}^{n}$.
It is known that the group of $\F_{q^m}$-linear isometries of $\F_{q^m}^n$ equipped with rank distance is generated by the (nonzero) scalar
multiplications of $\F_{q^m}$ and the linear group $\mathrm{GL}(n,\F_q)$, see e.g. \cite{berger2003isometries}. So we say that two rank metric codes $\C,\C' \subseteq \F_{q^m}^n$ are (linearly) equivalent if there exists an isometry $\phi$ such that $\phi(\C)=\C'$.
Clearly, when studying equivalence of $[n, k]_{q^m/q}$ codes the
action of $\F_{q^m}^*$ is trivial. This means that two $[n,k]_{q^m/q}$ codes $\C$ and $\C'$ are equivalent if and only if
there exists $A \in \mathrm{GL}(n,q)$ such that
$\C'=\C A=\{vA : v \in \C\}$. 
Most of the codes we will consider are \emph{non-degenerate}.

 \begin{definition}
An $[n,k]_{q^m/q}$ rank metric code $\C$ is said to be \textbf{non-degenerate} if the columns of any generator matrix of $\C$ are $\fq$-linearly independent. 
\end{definition}

The geometric counterpart of rank metric are the systems. 
 \begin{definition}
An $[n,k,d]_{q^m/q}$ \textbf{system} $U$ is an $\F_q$-subspace of $\F_{q^m}^k$ of dimension $n$, such that
$ \langle U \rangle_{\F_{q^m}}=\F_{q^m}^k$ and
$$ d=n-\max\left\{\dim_{\F_q}(U\cap H) \mid H \textnormal{ is an $\F_{q^m}$-hyperplane of }\F_{q^m}^k\right\}.$$
Moreover, two $[n,k,d]_{q^m/q}$ systems $U$ and $U'$ are \textbf{equivalent} if there exists an $\F_{q^m}$-isomorphism $\varphi\in\GL(k,\F_{q^m})$ such that
$$ \varphi(U) = U'.$$
We denote the set of equivalence classes of $[n,k,d]_{q^m/q}$ systems by $\mathfrak{U}[n,k,d]_{q^m/q}$.
\end{definition}

\begin{theorem}(see \cite{Randrianarisoa2020ageometric}) \label{th:connection}
Let $\C$ be a non-degenerate $[n,k,d]_{q^m/q}$ rank metric code and let $G$ be an its generator matrix.
Let $U \subseteq \F_{q^m}^k$ be the $\F_q$-span of the columns of $G$.
The rank weight of an element $x G \in \C$, with $x \in \F_{q^m}^k$ is
\begin{equation}\label{eq:relweight}
w(x G) = n - \dim_{\fq}(U \cap x^{\perp}),\end{equation}
where $x^{\perp}=\{y \in \F_{q^m}^k \colon x \cdot y=0\}.$ In particular,
\begin{equation} \label{eq:distancedesign}
d=n - \max\left\{ \dim_{\fq}(U \cap H)  \colon H\mbox{ is an } \F_{q^m}\mbox{-hyperplane of }\F_{q^m}^k  \right\}.
\end{equation}
\end{theorem}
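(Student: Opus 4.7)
The plan is to reduce the computation of $w(xG)$ to a rank-nullity calculation for an $\fq$-linear evaluation map $\varphi_x \colon U \to \F_{q^m}$, and then extract the minimum-distance formula by letting $x$ run through the nonzero vectors of $\F_{q^m}^k$.

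First I would write $G = (g_1 \mid \cdots \mid g_n)$ with $g_i \in \F_{q^m}^k$, so that by definition $U = \langle g_1, \ldots, g_n\rangle_{\fq}$. For $x \in \F_{q^m}^k$, the codeword is $xG = (x\cdot g_1, \ldots, x\cdot g_n)$, and by the definition of the rank weight in the vector model, $w(xG) = \dim_{\fq}\langle x\cdot g_1, \ldots, x\cdot g_n\rangle_{\fq}$. Now I would introduce the $\fq$-linear map
\[
\varphi_x \colon U \longrightarrow \F_{q^m}, \qquad u \longmapsto x \cdot u,
\]
and observe two things: its image equals $\langle x\cdot g_1, \ldots, x\cdot g_n\rangle_{\fq}$, so $w(xG) = \dim_{\fq}\mathrm{Im}(\varphi_x)$; and its kernel is exactly $U \cap x^{\perp}$.

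The non-degeneracy of $\C$ (i.e.\ $\fq$-linear independence of the columns of $G$) is used precisely to guarantee that $\dim_{\fq} U = n$. Combining this with the rank-nullity theorem applied to $\varphi_x$ gives
\[
w(xG) \;=\; \dim_{\fq}\mathrm{Im}(\varphi_x) \;=\; \dim_{\fq} U - \dim_{\fq}\ker(\varphi_x) \;=\; n - \dim_{\fq}(U \cap x^{\perp}),
\]
which is equation \eqref{eq:relweight}.

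For the formula on $d$, I would take the minimum of $w(xG)$ over $x \in \F_{q^m}^k \setminus \{0\}$; since $\C$ is non-degenerate the map $x \mapsto xG$ is injective, so nonzero codewords correspond to nonzero $x$. Then I use the fact that as $x$ ranges over $\F_{q^m}^k\setminus\{0\}$, the subspace $x^{\perp}$ ranges over all $\F_{q^m}$-hyperplanes of $\F_{q^m}^k$ (with each hyperplane arising from a full $\F_{q^m}^*$-orbit of $x$'s, but this does not affect the maximum of $\dim_{\fq}(U \cap x^{\perp})$). Swapping minimum with maximum via $d = n - \max_x \dim_{\fq}(U \cap x^{\perp})$ yields \eqref{eq:distancedesign}.

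I do not expect a genuine obstacle here: the content is essentially the $q$-analogue of the classical weight/hyperplane duality for Hamming-metric codes, with the only subtle point being the careful use of non-degeneracy to ensure $\dim_{\fq} U = n$ (so that the $g_i$ form an $\fq$-basis of $U$ and the evaluation map has the expected image and kernel dimensions).
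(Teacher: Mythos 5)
Your proof is correct: the rank--nullity argument for the evaluation map $\varphi_x\colon U\to\F_{q^m}$ is exactly the standard derivation of \eqref{eq:relweight}, and the passage from points $x$ to hyperplanes $x^{\perp}$ gives \eqref{eq:distancedesign}. The paper itself states this result without proof, citing the reference where it originates, and your argument coincides with the standard one given there; the only (harmless) imprecision is that injectivity of $x\mapsto xG$ follows from $G$ being a generator matrix of a $k$-dimensional code rather than from non-degeneracy, whose actual role is, as you correctly note, to guarantee $\dim_{\fq}U=n$.
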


Thanks the above Theorem, it has been proved that there is a one-to-one  correspondence  between  equivalence  classes of nondegenerate $[n,k,d]_{q^m/q}$ codes and equivalence classes of $[n,k,d]_{q^m/q}$ systems.
This correspondence can be formalized by the following two maps
\begin{align*}
    \Psi :  \mathfrak{C}[n,k,d]_{q^m/q} &\to\mathfrak{U}[n,k,d]_{q^m/q} \\
    \Phi : \mathfrak{U}[n,k,d]_{q^m/q} &\to \mathfrak{C}[n,k,d]_{q^m/q},
\end{align*}
that can be defined as follows.
Let $[\C]\in\mathfrak{C}[n,k,d]_{q^m/q}$ and $G$ be a generator matrix for $\C$. Then $\Psi([\C])$ is the equivalence class of $[n,k,d]_{q^m/q}$ systems $[U]$, where $U$ is the $\F_q$-span of the columns of $G$. In this case $U$ is also called a \textbf{system associate with} $\C$. Viceversa, given $[U]\in\mathfrak{U}[n,k,d]_{q^m/q}$. Define $G$ as the matrix whose columns are an $\F_q$-basis of $U$ and let $\C$ be the code generated by $G$. Then $\Phi([U])$ is the equivalence class of the $[n,k,d]_{q^m/q}$ codes $[\C]$. $\C$ is also called a \textbf{code associate with} $U$. $\Psi$ and $\Phi$ are well-posed and they are inverse of each other. See also \cite{alfarano2021linear}.

\begin{remark}
We note that the analogue of a repetition code for vector codes does not lead to divisible codes in the rank metric. For consider a vector $v\in \fqm^n$, and a vector $v'=v\oplus\cdots\oplus v\in \fqm^{ne}$. Then $w_q(v')=w_q(v)$, whereas $w_H(v')=e\cdot w_H(v)$, where $w_H$ denotes the Hamming weight.
\end{remark}

\section{Positive answers to Question \ref{question}}\label{sec:3}

In this section we answer positively to Question \ref{question} for rank metric codes in $\fq^{m\times m\ell}$. We will first analyze the case $\ell=1$, in which we will use the framework of linearized polynomials, the connection with $q$-systems and Theorem \ref{th:directions}. For the case $\ell>1$ we need to develop the connection with multivariate linearized polynomials and, as for the case $\ell=1$, then using the connection with $q$-systems together with Theorem \ref{th:dirmoreind} we get the desired result.

\subsection{Square case}

We start this subsection by applying Theorem \ref{th:directions} to rank metric codes which are two-dimensional. 


\begin{proposition}\label{prop:case2}
Let $\C=\langle f_1(x),f_2(x)\rangle_{\fqm}\subseteq \mathcal{L}_{m,q}[x]$ be a $2$-dimensional $\fqm$-linear rank metric code such that $f_1(x)$ is invertible and having second largest weight $m-e$.
Then $e \mid m$ and all the weights in $\C$ are divisible by $e$.
Moreover, $\C$ is equivalent to the code $\C'=\langle x,g(x)\rangle_{\fqm}$ with $g(x)=f_2\circ f_1^{-1}(x)$ which is $\F_{q^e}$-linear.
\end{proposition}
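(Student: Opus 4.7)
The plan is to translate the weight structure of $\C$ into the language of directions determined by a univariate linearized polynomial and invoke Theorem \ref{th:directions}. The first move is a cosmetic reduction: right-composition with $f_1^{-1}(x)$ is an equivalence of rank metric codes (take $h(x) = f_1^{-1}(x)$ in the definition of equivalence), so $\C$ is equivalent to $\C' := \C \circ f_1^{-1} = \langle x, g\rangle_{\fqm}$ with $g(x) = f_2 \circ f_1^{-1}(x)$. Equivalence preserves weights, so it suffices to prove the two assertions for $\C'$. The nonzero codewords of $\C'$ come in two families: those of the form $\alpha x$ have weight $m$, while those of the form $\beta(g(x)-\gamma x)$ with $\beta \in \fqm^*$ and $\gamma \in \fqm$ have rank $m - k_\gamma$, where $k_\gamma := \dim_{\fq}\ker(g(x)-\gamma x)$. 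The hypothesis on the second largest weight thus reads $e = \min\{k_\gamma : k_\gamma \geq 1\}$.

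Next I would view $g$ as a function $\fqm \to \fqm$. Its graph lies in $\AG(2,q^m)$; the directions it determines are the slopes $g(b)/b$ with $b \in \fqm^*$, i.e., the set $\{\gamma \in \fqm : k_\gamma \geq 1\}$, and any line with direction $\gamma$ through a graph point meets the graph in $q^{k_\gamma}$ points. Writing $q = p^{h_0}$, apply Theorem \ref{th:directions} with its role of $q$ played by $Q = q^m = p^{mh_0}$. Since $g$ is $\fq$-linear, every nonzero $k_\gamma$ satisfies $k_\gamma \geq e$ with equality attained, so the maximal $p$-power $s = p^{e'}$ dividing all the $q^{k_\gamma}$ equals $q^e$. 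Case $1$ of the theorem is excluded by $e \geq 1$ (the Singleton bound $d \leq m-1$ for a two-dimensional $[m,2]_{q^m/q}$ code forces some weight strictly less than $m$); case $3$ would give $k_{\gamma_0} = m$ for a unique direction, i.e., $g(x) = \gamma_0 x$, contradicting the $\fqm$-linear independence of $x$ and $g$. Hence case $2$ applies, and $e' \mid h$ becomes $h_0 e \mid mh_0$, equivalently $e \mid m$.

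To finish, the moreover clause of Theorem \ref{th:directions} gives that $g$ is $\F_{q^e}$-linear whenever $s > 2$; the corner case $s = 2$ forces $q = 2, e = 1$, where $\F_{q^e} = \F_q$ and the statement is vacuous. Thus $\C' = \langle x, g\rangle_{\fqm}$ is $\F_{q^e}$-linear. Moreover, since $g$ and each $\gamma x$ are $\F_{q^e}$-linear and $\F_{q^e} \subseteq \fqm$ (as $e \mid m$), each $\ker(g(x) - \gamma x)$ is an $\F_{q^e}$-subspace of $\fqm$, so $e \mid k_\gamma$; combined with $e \mid m$, this shows every weight $m - k_\gamma$ (and $m$ itself) is a multiple of $e$. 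The step requiring the most care is the identification $s = q^e$: Theorem \ref{th:directions} produces an a priori arbitrary $p$-power, and it is precisely the $\fq$-linearity of $g$ (which forces each cardinality $q^{k_\gamma}$ to be a $q$-power bounded below by $q^e$) that pins $s$ down to the specific value $q^e$, so that the condition ``$e' \mid h$'' in case $2$ translates cleanly into $e \mid m$ rather than a weaker $p$-adic divisibility statement.
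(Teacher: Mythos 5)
Your proposal is correct and follows essentially the same route as the paper's proof: reduce to $\C'=\langle x,g\rangle_{\fqm}$ via composition with $f_1^{-1}$, read the kernel dimensions $\dim_{\fq}\ker(g(x)-\gamma x)$ as intersection sizes of lines with the graph of $g$ in $\AG(2,q^m)$, identify $s=q^e$, and apply Theorem \ref{th:directions} to exclude the extreme cases and obtain $e\mid m$ together with the $\F_{q^e}$-linearity of $g$. Your added care in pinning down $s$ as exactly $q^e$ (rather than an arbitrary $p$-power) and in excluding the all-weights-$m$ case is a welcome tightening of details the paper passes over more quickly, but it is not a different argument.
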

\begin{proof}
First consider $\C'=\C\circ f_1^{-1}=\langle x,g(x) \rangle_{\fqm}$, where $g(x)=f_2\circ f_1^{-1}(x)$.
So the code $\C'$ is equivalent to the code $\C$ and hence their weight spectrum coincide.
Let 
\[U=\{(x,g(x)) \colon x \in \fqm\} \subseteq \mathrm{AG}(2,q^m)\]
be the graph associated with the function $g$.
A line $\ell$ with slope $a$ meets the set $U$ in either zero or $q^t$ points, where $t=\dim_{\fq}(\{ z \in \mathbb{F}_{q^m} \colon g(z)=az \})$. Since the second largest weight of $\mathcal{C}$ (and hence of $\mathcal{C}'$) is $m-e$ then we know that there exists $a \in \mathbb{F}_{q^m}$ such that 
\[ \dim_{\mathbb{F}_q}(\ker(g(x)-ax))=e \]
and for every $b \in \F_{q^m}$ then $\mathrm{rk}(g(x)-bx)\leq m-e$ and hence
\[ \dim_{\fq}(\ker(g(x)-bx) )\geq e. \]
Therefore, $s=q^e$ is the largest $p$-power such that any line of $\mathrm{AG}(2,q^m)$ meeting $U$ in at least one point intersects $U$ in a multiple of $s$ points. 

We can now apply Theorem \ref{th:directions} obtaining that either $s=q^m$ and $g(x)=\lambda x$ for some $\lambda \in \fqm$ or that $\F_{q^e}$ is a proper subfield of $\fqm$ (and $e \mid m$). The former case cannot happen, otherwise the code $\C$ would not have dimension $2$.
So the latter case holds. By Theorem \ref{th:directions} if $q^e>2$ then $g$ is $\F_{q^e}$-linear, whereas if $q^e=2$ it follows that $e=1$ and $q=2$ and in this case we already know that $g$ is $\F_{q^e}$-linear. 
This implies that all the linearized polynomials in $\C'$ are $\F_{q^e}$-linear and so the rank of the elements in $\C'$ (and hence in $\C$) is divisible by $e$.
\end{proof}


We will use the above proposition as the base case for the general case.

\begin{theorem}\label{th:mxmqelinear}
Let $\C=\langle f_1(x),\ldots,f_k(x)\rangle_{\fqm}\subseteq \mathcal{L}_{m,q}[x]$ be a $k$-dimensional $\fqm$-linear rank metric code. Assume that $\C$ is $e$-divisible and that $\C$ contains an invertible linearized polynomial.
Then $e \mid m$ and $\C$ is equivalent to a code $\C'=\langle x,g_2(x),\ldots, g_k(x)\rangle_{\fqm}$ such that $g_2(x),\ldots,g_k(x)$ are $\F_{q^{e}}$-linear.
\end{theorem}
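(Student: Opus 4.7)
The plan is to bootstrap Proposition \ref{prop:case2} from the $2$-dimensional case to the general case by slicing $\C$ into $2$-dimensional subcodes containing the distinguished invertible element. First, note that since $\C$ contains an invertible linearized polynomial (of rank $m$) and every rank in $\C$ is a multiple of $e$, we immediately obtain $e\mid m$ for free. Up to reordering the generators, I may assume that $f_1$ is invertible, and then pass to the equivalent code
\[
\C'=\langle x,\,g_2(x),\ldots,g_k(x)\rangle_{\fqm},\qquad g_i(x):=f_i\circ f_1^{-1}(x),
\]
which inherits $\fqm$-linearity and $e$-divisibility. Moreover, the $\fqm$-linear independence of $f_1,\ldots,f_k$ forces each $g_i$, $i\ge 2$, to lie outside $\fqm\cdot x$, so each $\langle x,g_i\rangle_{\fqm}$ is genuinely $2$-dimensional.

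For each $i\in\{2,\ldots,k\}$, I would then consider the $2$-dimensional subcode
\[
\C_i'=\langle x,\, g_i(x)\rangle_{\fqm}\subseteq \C'.
\]
It is $\fqm$-linear, $e$-divisible, contains the invertible polynomial $x$, and since $g_i\notin \fqm\cdot x$ it contains elements of rank strictly less than $m$ (its minimum distance is at most $m-1$ by the Singleton-type bound for an $\fqm$-linear $[m,2]$ code). Hence $\C_i'$ admits a well-defined second largest weight of the form $m-e_i$ with $1\le e_i\le m-1$. Applying Proposition \ref{prop:case2} directly to $\C_i'$ yields $e_i\mid m$ and that $g_i$ is $\F_{q^{e_i}}$-linear.

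It only remains to upgrade the ``private'' exponents $e_i$ to the uniform value $e$. Because $\C_i'\subseteq \C'$ is $e$-divisible we have $e\mid m-e_i$, and combined with $e\mid m$ this gives $e\mid e_i$, so $\F_{q^e}\subseteq \F_{q^{e_i}}$. Since any $\F_{q^{e_i}}$-linear map is automatically $\F_{q^e}$-linear, each $g_i$ is $\F_{q^e}$-linear, as required. I do not expect a serious obstacle: the argument is essentially divisibility bookkeeping once Proposition \ref{prop:case2} has been invoked on each $2$-dimensional slice. The one point deserving a brief check is the existence of the second largest weight in each $\C_i'$, which follows from $g_i\notin \fqm\cdot x$ and the Singleton-type bound.
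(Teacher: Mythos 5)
Your proposal is correct and follows essentially the same route as the paper: compose with $f_1^{-1}$ to normalise to $\langle x,g_2,\ldots,g_k\rangle_{\fqm}$, apply Proposition \ref{prop:case2} to each two-dimensional slice $\langle x,g_i\rangle_{\fqm}$, and use $e$-divisibility of the slice together with $e\mid m$ to see that the exponent $e_i$ coming from the second largest weight is a multiple of $e$, whence $\F_{q^{e_i}}$-linearity implies $\F_{q^e}$-linearity. Your extra checks (that each slice is genuinely two-dimensional and that its second largest weight exists) are sound and only make explicit what the paper leaves implicit.
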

\begin{proof}
Without loss of generality, we may assume that $f_1(x)$ is an invertible element in $\C$.
As before, consider $\C'=\C\circ f_1^{-1}=\langle x,g_2(x),\ldots,g_k(x) \rangle_{\fqm}$, where $g_i(x)=f_i\circ f_1^{-1}(x)$ for every $i \in \{2,\ldots,k\}$.
In particular, the codes $\C$ and $\C'$ are equivalent and their weight spectra coincide.
By the assumptions, the weight of the elements in $\langle x, g_i(x) \rangle_{\fqm}$ are divisible by $e$ and hence the second maximum weight in the rank metric code $\langle x, g_i(x) \rangle_{\fqm}$ has the form $m-j_ie$, for some non-negative integer $j_i$. By Proposition \ref{prop:case2}, $g_i(x)$ is $\F_{q^{j_ie}}$-linear and hence $\F_{q^e}$-linear. The assertion is then proved.
\end{proof}


\begin{remark}
Note that in the proof of the last theorem we apply Proposition \ref{prop:case2} to the rank metric codes $\C_i=\langle x,g_i(x)\rangle_{\fqm}$ for any $i \in \{2,\ldots,k\}$ to obtain that the $g_i$'s are $\F_{q^e}$-linear.
We note here that the result holds true because all the $g_i$'s share the same copy of $\F_{q^e}$.
Indeed, recall that for a rank metric code $\mathcal{C}\subseteq \mathcal{L}_{m,q}[x]$ the \emph{centraliser} of $\mathcal{C}$ is defined as follows
\[ \mathrm{Cent}(\C)=\{h \in \mathcal{L}_{m,q}[x] \colon h \circ f=f \circ h,\,\,\forall f \in \C\} \]
and clearly if $\C_1$ and $\C_2$ are two rank metric codes with $\C_1 \subseteq \C_2$, then $\mathrm{Cent}(\C_1) \supseteq \mathrm{Cent}(\C_2)$.
Since the identity, that is $x$, is contained in all of these $\C_i$'s, then \[L(\C_i)\leq \C_i,\] 
so that
\[\mathrm{Cent}(\C_i)\leq \mathrm{Cent}(L(\C_i)).\] 
If $L(\C_i)$ contains $\mathcal{F}=\{\alpha x \colon \alpha \in \fqm\}\simeq\fqm$, then $\mathrm{Cent}(\C_i)\leq \mathrm{Cent}(\mathcal{F})=\mathcal{F}$ and hence it is uniquely determined.
\end{remark}

Hence we have the following.

\begin{corollary}\label{cor:matrixmmFqe}
Let $\C$ be an $\F_{q^m}$-linear rank metric code in $\fq^{m\times m}$ containing at least one invertible matrix. Then $\C$ is $e$-divisible if and only if it is equivalent to a code $\C'$ which arises from a rank metric code in $\F_{q^e}^{\frac{m}e\times \frac{m}e}$.
\end{corollary}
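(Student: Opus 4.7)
The plan is to transport the corollary across the $\fq$-algebra isomorphism $\fq^{m\times m}\cong\tilde{\mathcal{L}}_{m,q}[x]$ induced by a fixed $\fq$-basis of $\fqm$, and then quote Theorem \ref{th:mxmqelinear}. The converse direction is immediate: if $\C$ is equivalent to $Em(\C')$ for some $\C'\subseteq\F_{q^e}^{\frac{m}{e}\times \frac{m}{e}}$, then, as observed in the introduction, $Em$ multiplies the rank of every element by $e$, and equivalences preserve rank, so every nonzero element of $\C$ has rank a multiple of $e$.

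For the forward direction, suppose $\C$ is $e$-divisible. Identifying $\fq^{m\times m}$ with $\tilde{\mathcal{L}}_{m,q}[x]$ through the chosen basis of $\fqm$, the code $\C$ corresponds to an $\fqm$-linear rank metric code $\widetilde{\C}\subseteq\tilde{\mathcal{L}}_{m,q}[x]$ that contains an invertible $q$-polynomial; up to equivalence (by the Remark preceding the definition of $ev_{\mathcal{B}}$) we may further assume $\widetilde{\C}$ is an $\fqm$-subspace of $\tilde{\mathcal{L}}_{m,q}[x]$. Theorem \ref{th:mxmqelinear} then provides that $e\mid m$ and an equivalent code $\widetilde{\C}''=\langle x,g_2(x),\ldots,g_k(x)\rangle_{\fqm}$ in which each $g_i$ is $\F_{q^e}$-linear. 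Since $\F_{q^e}\subseteq\fqm$, any $\fqm$-combination of $\F_{q^e}$-linear $q$-polynomials is still $\F_{q^e}$-linear; hence every element of $\widetilde{\C}''$ is an $\F_{q^e}$-linear endomorphism of $\fqm$. Viewing $\fqm$ as an $\F_{q^e}$-vector space of dimension $m/e$ (which makes sense because $e\mid m$) and fixing an $\F_{q^e}$-basis, each such endomorphism is represented by a matrix in $\F_{q^e}^{m/e\times m/e}$; the collection of these matrices is a rank metric code $\C''\subseteq\F_{q^e}^{m/e\times m/e}$.

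The only delicate point — and the main obstacle — is to verify that, under a suitable choice of bases, the translation from $\C''\subseteq\F_{q^e}^{m/e\times m/e}$ back to $\fq^{m\times m}$ coincides exactly with the embedding $Em$ of the definition (rather than with some other bases-dependent realisation of the same abstract inclusion $\mathrm{End}_{\F_{q^e}}(\fqm)\hookrightarrow\mathrm{End}_{\fq}(\fqm)$). This is handled by choosing bases in a compatible tower: start from an $\fq$-basis of $\F_{q^e}$ and an $\F_{q^e}$-basis of $\fqm$, and take as $\fq$-basis of $\fqm$ their product (i.e.\ each $\F_{q^e}$-basis element expanded in the $\fq$-basis of $\F_{q^e}$). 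With this compatible chain $\fq\subseteq\F_{q^e}\subseteq\fqm$, the matrix of an $\F_{q^e}$-linear endomorphism written in $\fq^{m\times m}$ is precisely the $Em$-image of its matrix in $\F_{q^e}^{m/e\times m/e}$, so $\C$ is equivalent to $Em(\C'')$ and the proof is complete.
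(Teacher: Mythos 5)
Your proposal is correct and follows essentially the same route as the paper: pass to the linearized-polynomial model, invoke Theorem \ref{th:mxmqelinear} to get an equivalent code spanned by $\F_{q^e}$-linear $q$-polynomials, and reinterpret these as matrices over $\F_{q^e}$. The extra care you take with the converse direction and with choosing a compatible tower of bases so that the reinterpretation agrees with $Em$ is detail the paper leaves implicit, but it is the same argument.
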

\begin{proof}
An $\F_{q^m}$-linear rank metric code $\C$ in $\fq^{m\times m}$ can be described as an $\F_{q^m}$-subspace $\mathcal{C}'$ of $\mathcal{L}_{m,q}[x]$ with the same weight spectrum.
Therefore, we can apply Theorem \ref{th:mxmqelinear} obtaining that the polynomials in $\mathcal{C}'$ are $\F_{q^e}$-linear, so that we may see $\mathcal{C}'$ as a subspace of $\mathbb{F}_{q^e}^{m/e\times m/e}$. In particular, we have that $\C=Em(\C')$.
\end{proof}

\subsection{Non-square case}

As already said in the previous section, we can describe any $\F_{q^m}$-linear rank metric code in $\F_{q^m}^m$ as an $\F_{q^m}$-subspace of $\mathcal{L}_{m,q}[x]$. This allowed us to prove Corollary \ref{cor:matrixmmFqe}. 

We will first extend the connection between linearized polynomials and linear rank metric codes, and then we will use this extension to generalize Corollary \ref{cor:matrixmmFqe}.

Let denote by $\tilde{\mathcal{L}}_{m,q}[x_1,\ldots,x_{\ell}]$  (or by $\tilde{\mathcal{L}}_{m,q}[\underline{x}]$) the $\fq$-vector space of linearized polynomials over $\F_{q^m}$ in the indeterminates $\underline{x}=(x_1,\ldots,x_{\ell})$ modulo the ideal $(x_1^{q^m}-x_1,\ldots,x_{\ell}^{q^m}-x_{\ell})$. As an $\fq$-vector space, $\tilde{\mathcal{L}}_{m,q}[\underline{x}]$ is $\fq$-linear isometric to the $\fq$-vector space of $\fq$-linear maps from $\F_{q^m}^\ell$ to $\F_{q^m}$ and to the $\fq$-vector space $\F_q^{m \times m\ell}$.

In the following result, we point out that studying $\F_{q^m}$-linear rank metric codes in $\F_{q^m}^{m\ell}$ is equivalent to study $\F_{q^m}$-subspaces of $\tilde{\mathcal{L}}_{m,q}[\underline{x}]$.

\begin{proposition}\label{prop:ev}
Let $\mathcal{B}=(a_1,\ldots,a_{m\ell})$ be an $\F_q$-basis of $\F_{q^m}^\ell$ (seen as an $\fq$-vector space).
The map
\begin{equation*}
    \begin{split}
        ev_{\mathcal{B}} \colon & \tilde{\mathcal{L}}_{m,q}[x_1,\ldots,x_{\ell}] \rightarrow \F_{q^m}^{m\ell} \\
        &
        f \mapsto (f(a_1),\ldots,f(a_{\ell m}))
    \end{split}
\end{equation*} 
is an $\F_{q^m}$-linear isomorphism which preserves the rank, i.e. $\mathrm{rk}(f)=\dim_{\fq}(\mathrm{Im}(f))=\dim_{\fq}(\langle f(a_1),\ldots,f(a_{\ell m}) \rangle_{\fq})$, for any $f \in \tilde{\mathcal{L}}_{m,q}[x_1,\ldots,x_{\ell}]$.
Moreover, if $\mathcal{C}=\langle f_1(\underline{x}),\ldots,f_k(\underline{x}) \rangle_{\F_{q^m}}$, then a generator matrix for $ev_{\mathcal{B}}(\C)$ is 
\[ G=\begin{pmatrix}
f_1(a_1) & f_1(a_2) & \ldots & f_1(a_{\ell m})\\
f_2(a_1) & f_2(a_2) & \ldots & f_2(a_{\ell m})\\
\vdots & & & \\
f_k(a_1) & f_k(a_2) & \ldots & f_k(a_{\ell m})
\end{pmatrix}.\]
\end{proposition}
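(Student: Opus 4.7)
My plan is to view $ev_{\mathcal{B}}$ as the composition of two simpler $\F_{q^m}$-linear maps and verify each is an isomorphism, then read off rank preservation and the generator matrix.

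First I would factor $ev_{\mathcal{B}}$ as
\[
\tilde{\mathcal{L}}_{m,q}[\underline{x}] \xrightarrow{\;\Phi\;} \mathrm{Hom}_{\fq}(\F_{q^m}^\ell,\F_{q^m}) \xrightarrow{\;\Psi_{\mathcal{B}}\;} \F_{q^m}^{m\ell},
\]
where $\Phi$ sends a multivariate linearized polynomial (or rather, its class mod $(x_i^{q^m}-x_i)_{i=1}^\ell$) to the $\fq$-linear map it induces on $\F_{q^m}^\ell$, and $\Psi_{\mathcal{B}}(\varphi)=(\varphi(a_1),\ldots,\varphi(a_{m\ell}))$. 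Both are visibly $\F_{q^m}$-linear: on $\Phi$ because multiplying a coefficient by $\alpha\in\F_{q^m}$ multiplies the output of the induced map by $\alpha$, and on $\Psi_{\mathcal{B}}$ by inspection. The map $\Psi_{\mathcal{B}}$ is an isomorphism because $\mathcal{B}$ is an $\fq$-basis of $\F_{q^m}^\ell$ and an $\fq$-linear map to $\F_{q^m}$ is uniquely determined by, and can be prescribed arbitrarily on, such a basis.

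The main step, and I would say the only nontrivial point, is that $\Phi$ is an isomorphism. A dimension count shows both spaces have $\fq$-dimension $m^2\ell$ (the source has $\F_{q^m}$-basis $\{x_i^{q^j} : 1\le i\le\ell,\,0\le j\le m-1\}$, giving $m\ell$ over $\F_{q^m}$ hence $m^2\ell$ over $\fq$; the target has $\fq$-dimension $\dim_{\fq}(\F_{q^m}^\ell)\cdot\dim_{\fq}(\F_{q^m})=m\ell\cdot m$). Hence it suffices to prove injectivity. Suppose $f(x_1,\ldots,x_\ell)$ vanishes on $\F_{q^m}^\ell$. Writing $f=\sum_{i=1}^\ell f_i(x_i)$ where each $f_i\in\tilde{\mathcal{L}}_{m,q}[x_i]$, fix $\alpha_2,\ldots,\alpha_\ell\in\F_{q^m}$ arbitrarily; then $f_1(x_1)+\sum_{i\ge 2} f_i(\alpha_i)$ is a univariate $q$-polynomial of $q$-degree less than $m$ vanishing on $\F_{q^m}$, hence is zero as an element of $\tilde{\mathcal{L}}_{m,q}[x_1]$, which in particular forces $f_1\equiv 0$ (the constant term also being $0$ after letting $\alpha_i=0$). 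Iterating on each variable kills all $f_i$, so $f$ is the zero class.

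Once $\Phi$ and $\Psi_{\mathcal{B}}$ are isomorphisms, rank preservation is immediate: by the definition adopted in the excerpt, $\mathrm{rk}(f)=\dim_{\fq}\mathrm{Im}(\Phi(f))$, and because $\mathcal{B}$ generates $\F_{q^m}^\ell$ over $\fq$ and $\Phi(f)$ is $\fq$-linear, we have $\mathrm{Im}(\Phi(f))=\langle f(a_1),\ldots,f(a_{m\ell})\rangle_{\fq}$, which coincides with the $\fq$-span of the entries of $ev_{\mathcal{B}}(f)$. Finally, if $\mathcal{C}=\langle f_1,\ldots,f_k\rangle_{\F_{q^m}}$ then $ev_{\mathcal{B}}(\mathcal{C})=\langle ev_{\mathcal{B}}(f_1),\ldots,ev_{\mathcal{B}}(f_k)\rangle_{\F_{q^m}}$ by $\F_{q^m}$-linearity, so stacking the vectors $ev_{\mathcal{B}}(f_j)$ as rows yields the claimed generator matrix $G$; the rows are $\F_{q^m}$-linearly independent because $ev_{\mathcal{B}}$ is injective, matching the expected dimension $k$.
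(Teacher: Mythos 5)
Your proof is correct. Note that the paper states this proposition without any proof, treating it as the routine multivariate analogue of the univariate evaluation map $ev_{\mathcal{B}}\colon \tilde{\mathcal{L}}_{n,q}[x]\to\F_{q^n}^n$ introduced earlier (also given without proof), so there is nothing to compare your argument against. Your verification is complete: the only substantive point is that the passage from classes of multivariate $q$-polynomials to $\fq$-linear maps $\F_{q^m}^\ell\to\F_{q^m}$ is bijective, and your dimension count plus the reduction of injectivity to the univariate fact (a $q$-polynomial of $q$-degree less than $m$ vanishing on all of $\F_{q^m}$ is the zero polynomial, obtained here by specialising the other variables to $0$ so that no constant term interferes) handles this correctly; the rank-preservation and generator-matrix claims then follow exactly as you say.
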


Therefore, the map $ev_{\mathcal{B}}$ allows us to study rank metric codes as $\F_{q^m}$-subspaces of $\tilde{\mathcal{L}}_{m,q}[\underline{x}]$.
We are now interested in determining a \emph{canonical form} for the code, as the one we have in the square case when there is an invertible element. 
To this aim we will need the following auxiliary lemma. 

\begin{lemma}\label{lem:blockset}
Let $U$ be an $n$-dimensional $\fq$-subspace of an $\F_{q^m}$-vector space $V$ of dimension $r$, with $n<mr$. If $H$ is an $\F_{q^m}$-subspace of $V$ of dimension over $\F_q$ greater than $rm-n$, then $\dim_{\fq}(H\cap U)>0$.
Moreover, for any positive integer $l$ such that $lm\leq rm-n$, there exists an $\F_{q^m}$-subspace of $V$ such that $\dim_{\F_{q^m}}(S)=l$ and $S\cap U=\{0\}$.
\end{lemma}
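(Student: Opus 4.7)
My plan is to handle the two statements separately. The first statement follows at once from Grassmann's formula inside the $\fq$-vector space $V$: one has
\[
\dim_{\fq}(H\cap U)=\dim_{\fq}(H)+\dim_{\fq}(U)-\dim_{\fq}(H+U)\geq \dim_{\fq}(H)+n-rm,
\]
which is strictly positive under the assumption $\dim_{\fq}(H)>rm-n$. (Note that $H$ being an $\fqm$-subspace is not needed here; only that $H+U\subseteq V$.)

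For the second statement I would proceed by induction on $l$. The base case $l=1$ is the core of the argument: I want to exhibit an $\fqm$-line of $V$ disjoint from $U\setminus\{0\}$. The total number of one-dimensional $\fqm$-subspaces of $V$ equals $(q^{rm}-1)/(q^m-1)=1+q^m+\cdots+q^{(r-1)m}$. An $\fqm$-line meeting $U$ nontrivially has the form $\langle u\rangle_{\fqm}$ with $u\in U\setminus\{0\}$; any such line contains the full $\fq$-subspace $\langle u\rangle_{\fq}\subseteq U$, so it accounts for at least $q-1$ distinct nonzero elements of $U$. Hence the number of ``bad'' lines is at most $(q^n-1)/(q-1)$. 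The assumption $m\leq rm-n$ gives $n\leq (r-1)m$, so that even the single term $q^{(r-1)m}$ in the count of lines already exceeds $(q^n-1)/(q-1)$, and a good line exists.

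For the inductive step, assume $lm\leq rm-n$ and that the statement holds for $l-1$ (which is applicable since $(l-1)m\leq rm-n$). Pick an $\fqm$-subspace $T\subseteq V$ of dimension $l-1$ with $T\cap U=\{0\}$. Consider the quotient $W=V/T$, an $\fqm$-vector space of dimension $r-l+1$, and the projection $\pi\colon V\to W$. Since $\ker(\pi|_U)=U\cap T=\{0\}$, the image $\pi(U)$ is an $\fq$-subspace of $W$ of dimension exactly $n$; moreover $lm\leq rm-n$ rewrites as $m\leq (r-l+1)m-n$, which is precisely the base-case hypothesis inside $W$. Therefore there exists a one-dimensional $\fqm$-subspace $\fqm\bar v\subseteq W$ disjoint from $\pi(U)\setminus\{0\}$. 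Lifting $\bar v$ to $v\in V$ and setting $S=T+\fqm v$ yields an $\fqm$-subspace of dimension $l$: if some $s+\alpha v\in U$ with $s\in T$, $\alpha\in\fqm$, applying $\pi$ gives $\alpha\bar v\in\pi(U)$, forcing $\alpha=0$, whence $s\in T\cap U=\{0\}$.

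The only delicate point is verifying that the projection preserves the $\fq$-dimension of $U$ and that the intersection condition lifts cleanly from $W$ back to $V$; both follow transparently from $U\cap T=\{0\}$ guaranteed by the inductive hypothesis. Apart from this observation, the proof is a clean counting argument plus an inductive reduction, so I do not anticipate any serious obstacle.
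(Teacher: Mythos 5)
Your proof is correct. The first part is the same Grassmann-formula argument as in the paper. For the second part the core counting is also the same one the paper uses, just packaged differently: your ``bad lines in $V/T$, each accounting for at least $q-1$ nonzero elements of $\pi(U)$'' is exactly the paper's observation that the $(i+1)$-dimensional $\F_{q^m}$-subspaces through a disjoint subspace $T$ partition $U\setminus\{0\}$ into pieces of size at least $q-1$, since lines of $V/T$ correspond bijectively to such subspaces. The difference is purely organizational: you run an induction on $l$, reducing each step to the $l=1$ case in a quotient, whereas the paper argues by contradiction from a \emph{maximal} $i<l$ admitting a disjoint $i$-dimensional subspace and derives the inequality $lm\geq rm-n+1$ in one shot. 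Your version makes the lifting step ($\alpha\bar v\in\pi(U)\Rightarrow\alpha=0$) and the dimension bookkeeping in the quotient fully explicit, which is arguably cleaner; the paper's version avoids the quotient construction at the cost of a slightly more opaque chain of inequalities. Both are complete, and the delicate points you flag (that $\pi$ preserves $\dim_{\fq}U$ and that disjointness lifts) are indeed settled by $U\cap T=\{0\}$ as you say.
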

\begin{proof}
Let $h=\dim_{\F_{q^m}}(H)$ and suppose that $mh >rm-n$, that is $\dim_{\fq}(H)+\dim_{\fq}(U)>rm$.
Since $\dim_{\fq}(U+H)\leq rm$, then we have $\dim_{\fq}(U\cap H)>0$.
For the second part of the assertion, let $l$ be the maximum positive integer (if it exists) such that $lm\leq rm-n$.
Let $i$ be the maximum non-negative integer such that there exists an $\F_{q^m}$-subspace $T$ of dimension $i$ in $V$ such that $T\cap U=\{0\}$ with $i<l$.
This implies that, by the first part, that $im\leq rm-n$ and all the $\F_{q^m}$-subspaces of dimension greater than $i$ meet $U$ in a non-trivial subspace.
The set of all the $(i+1)$-dimensional $\F_{q^m}$-subspaces through $T$ determines a partition of $U\setminus\{0\}$, from which we have
\[ q^n-1=|U|-1\geq (q-1)\frac{q^{m(r-i)}-1}{q^m-1}. \]
We can then derive that 
\[ n+m\geq rm-im+1\geq rm-(l-1)m+1, \]
that is $lm \geq rm-n+1$, a contradiction to the definition of $l$.
\end{proof}

We are now ready to provide a \emph{canonical form} for rank metric codes in $\tilde{\mathcal{L}}_{m,q}[\underline{x}]$.

\begin{theorem}\label{th:disjointkernel}
Let $\mathcal{C}=\langle f_1(\underline{x}),\ldots,f_k(\underline{x}) \rangle_{\F_{q^m}} \subset \tilde{\mathcal{L}}_{m,q}[x_1,\ldots,x_{\ell}]$ with $\dim_{\F_{q^m}}(\C)=k$ and $\ell <k$.
If $\dim_{\fq}\left( \bigcap_{i=1}^k \ker(f_i) \right)=0$, then there exist $\F_{q^m}$-linearly independent elements $g_1,\ldots,g_{\ell} \in \mathcal{C}$ such that 
\[ \bigcap_{i=1}^\ell \ker(g_i) =\{0\}. \]
\end{theorem}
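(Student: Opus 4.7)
The plan is to reinterpret the problem as a disjointness question in $\F_{q^m}^k$ and then invoke Lemma~\ref{lem:blockset} at the boundary of its range of applicability.

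I would first consider the $\F_q$-linear evaluation map
\[
E\colon \F_{q^m}^\ell \longrightarrow \F_{q^m}^k, \qquad v \longmapsto (f_1(v),\ldots,f_k(v)).
\]
The hypothesis $\bigcap_{i=1}^k \ker(f_i)=\{0\}$ is precisely the statement that $E$ is injective, so its image $U:=E(\F_{q^m}^\ell)$ is an $\F_q$-subspace of $\F_{q^m}^k$ of $\F_q$-dimension $m\ell$. For any $\beta=(\beta_1,\ldots,\beta_k) \in \F_{q^m}^k$, the element $g_\beta := \sum_{j=1}^k \beta_j f_j \in \C$ satisfies $g_\beta(v)=\beta\cdot E(v)$, and hence $\ker(g_\beta)=E^{-1}(U\cap \beta^\perp)$, where $\beta^\perp=\{y\in\F_{q^m}^k:\beta\cdot y=0\}$ is an $\F_{q^m}$-hyperplane. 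Since $E$ is injective and $f_1,\ldots,f_k$ is an $\F_{q^m}$-basis of $\C$, any $\F_{q^m}$-linearly independent $\beta_1,\ldots,\beta_\ell$ produce $\F_{q^m}$-linearly independent elements $g_{\beta_1},\ldots,g_{\beta_\ell}\in\C$, and one has the equivalence
\[
\bigcap_{i=1}^\ell \ker(g_{\beta_i})=\{0\} \iff U\cap \bigcap_{i=1}^\ell \beta_i^\perp=\{0\}.
\]

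The task is therefore reduced to producing an $\F_{q^m}$-subspace $S\subseteq\F_{q^m}^k$ of $\F_{q^m}$-dimension $k-\ell$ with $S\cap U=\{0\}$, since any such $S$ can be written as $S=\bigcap_{i=1}^\ell \beta_i^\perp$ for $\F_{q^m}$-linearly independent $\beta_1,\ldots,\beta_\ell$. This is exactly what Lemma~\ref{lem:blockset} provides, applied with $V=\F_{q^m}^k$, $r=k$, $n=m\ell$ and $l=k-\ell$: the hypothesis $n<rm$ of the lemma becomes $m\ell<mk$, i.e.\ $\ell<k$, which is among our assumptions, and the numerical condition $lm\leq rm-n$ becomes the equality $(k-\ell)m=km-m\ell$. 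I do not foresee any serious obstacle: once the evaluation-map dictionary is in place the conclusion is an immediate application of Lemma~\ref{lem:blockset}, used exactly at its extremal case $lm=rm-n$, which is precisely why the strict inequality $\ell<k$ is needed rather than $\ell\leq k$.
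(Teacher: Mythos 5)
Your proposal is correct and follows essentially the same route as the paper: the paper's proof also introduces the evaluation map $\varphi(a)=(f_1(a),\ldots,f_k(a))$, applies Lemma~\ref{lem:blockset} to find an $\F_{q^m}$-subspace $L$ of dimension $k-\ell$ disjoint from $\mathrm{Im}(\varphi)$, and realises $L$ as the kernel of a rank-$\ell$ matrix $A$ whose rows play exactly the role of your $\beta_1,\ldots,\beta_\ell$. The only cosmetic difference is that you phrase the construction via hyperplanes $\beta_i^\perp$ and the preimage identity $\ker(g_\beta)=E^{-1}(U\cap\beta^\perp)$ rather than via the homogeneous system $A(f_1(\underline{x}),\ldots,f_k(\underline{x}))^\top=O$.
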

\begin{proof}
Let consider $\varphi\colon a \in \F_{q^m}^\ell \mapsto (f_1(a),\ldots,f_k(a)) \in \F_{q^m}^k$, which clearly is an $\F_{q}$-linear map.
Since, $\dim_{\fq}\left( \bigcap_{i=1}^k \ker(f_i) \right)=0$ we have that $\ker(\varphi)=\{0\}$ and $\dim_{\F_{q}}(\mathrm{Im}(\varphi))=m \ell<mk$.
Because of the dimension of $\mathrm{Im}(\varphi)$, by Lemma \ref{lem:blockset}, there exists an $\F_{q^m}$-subspace $L$ of $\F_{q^m}^k$ of dimension $k-\ell$ such that $L\cap \mathrm{Im}(\varphi)=\{0\}$.
Suppose that $L$ corresponds to the set of the solutions of an homogeneous system associated with the matrix $A \in \F_{q^m}^{\ell \times k}$ of rank $\ell$.
It follows that the system in $\underline{x}$
\begin{equation}\label{eq:systcondpoldisj} A \begin{pmatrix}
f_1(\underline{x})\\
\vdots \\
f_k(\underline{x})
\end{pmatrix}
= O,
\end{equation}
where $O$ is the zero column in $\F_{q^m}^k$, has only the zero vector as solution, because of the fact that $\dim_{\fq}\left( \bigcap_{i=1}^k \ker(f_i) \right)=0$ and $L\cap \mathrm{Im}(\varphi)=\{0\}$.
Let
\[ A \begin{pmatrix}
f_1(\underline{x})\\
\vdots \\
f_k(\underline{x})
\end{pmatrix}
=  \begin{pmatrix}
g_1(\underline{x})\\
\vdots \\
g_\ell(\underline{x})
\end{pmatrix},
\]
then $g_1,\ldots,g_{\ell} \in \C$ and $\bigcap_{i=1}^\ell \ker(g_i)=\{0\}$, otherwise System \eqref{eq:systcondpoldisj} would admit a non zero solution.
The assertion then follows since the rank $A$ is $\ell$ and so the polynomials $g_1,\ldots,g_{\ell}$ are $\F_{q^m}$-linearly independent.
\end{proof}

\begin{remark}
Note that the condition $\dim_{\fq}\left( \bigcap_{i=1}^k \ker(f_i) \right)=0$ of the above theorem exactly corresponds to the non-degenerate condition on $\ev_{\mathcal{B}}(\C)$, for some $\fq$-basis $\mathcal{B}$ of $\F_{q^m}^\ell$.
Indeed, suppose that $\dim_{\fq}\left( \bigcap_{i=1}^k \ker(f_i) \right)>0$ and let $a \in \bigcap_{i=1}^k \ker(f_i) \setminus\{0\}$. Consider $\mathcal{B}$ any $\fq$-basis of $\F_{q^m}^\ell$ in which $a$ appears, namely $\mathcal{B}=(a,a_2,\ldots,a_{m\ell})$. Then when we consider the generator matrix associated with $ev_{\mathcal{B}}$ as in Proposition \ref{prop:ev}, then its first columns will be zero and hence $ev_{\mathcal{B}}(\C)$ is degenerate.
Conversely, assume that $ev_{\mathcal{B}}(\C)$ is degenerate and consider a generator matrix as in Proposition \ref{prop:ev}. Then one of the columns of $G$, without loss of generality we can suppose the first one, can be expressed as an $\fq$-linear combinations of the remaining column, namely
\[ \begin{pmatrix}
f_1(a_1)\\
\vdots\\
f_k(a_1)
\end{pmatrix}=\alpha_1 \begin{pmatrix}
f_1(a_2)\\
\vdots\\
f_k(a_2)
\end{pmatrix} +\ldots +\alpha_{\ell m-1} \begin{pmatrix}
f_1(a_{\ell m})\\
\vdots\\
f_k(a_{\ell m})
\end{pmatrix},\]
where $\alpha_i \in \fq$. Since the polynomials are linear over $\fq$ we obtain that $a_1-\alpha_1 a_2-\ldots-\alpha_{\ell m-1}a_{\ell m} \in \ker(f_i)\setminus\{0\}$ for any $i$, that is $\dim_{\fq}\left( \bigcap_{i=1}^k \ker(f_i) \right)>0$.
\end{remark}

\begin{remark}
The notion of equivalence of rank metric codes can be also read in the model of linearized polynomials. Indeed, the linear equivalence of rank metric codes $\F_{q}^{m\times \ell m}$ is given by the natural action of the following group on $\F_{q}^{m\times \ell m}$
\[ \{ (A,B) \colon A \in \mathrm{GL}( m,q),B\in \mathrm{GL}(\ell m,q) \}. \]
In terms of linearized polynomials, equivalent rank metric codes can be defined via the natural action of the following group on $\tilde{\mathcal{L}}_{m,q}[\underline{x}]$ 
\[ \{ (g,f) \colon g \in \tilde{\mathcal{L}}_{m,q}[x], f=(f_1,\ldots,f_{\ell})\in(\tilde{\mathcal{L}}_{m,q} [\underline{x}])^\ell \text{ are invertible} \}. \]
\end{remark}

\begin{corollary}
Let $\mathcal{C}=\langle f_1(\underline{x}),\ldots,f_k(\underline{x}) \rangle_{\F_{q^m}} \subset \tilde{\mathcal{L}}_{m,q}[x_1,\ldots,x_{\ell}]$ with $\dim_{\F_{q^m}}(\C)=k$, $\ell <k$ and 
\[ \bigcap_{i=1}^k \ker(f_i) =\{0\}. \]
Then $\C$ is equivalent to a rank metric code of the form
\[ \C'=\langle x_1,\ldots,x_{\ell},g_{\ell+1}(\underline{x}),\ldots,g_{k}(\underline{x}) \rangle_{\F_{q^m}}, \]
for some $g_{\ell+1},\ldots,g_{k} \in \tilde{\mathcal{L}}_{m,q}[x_1,\ldots,x_{\ell}]$.
\end{corollary}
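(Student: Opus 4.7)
The plan is to reduce the corollary to a direct application of Theorem \ref{th:disjointkernel}, followed by an equivalence transformation that straightens out the first $\ell$ generators. Since the hypotheses of Theorem \ref{th:disjointkernel} coincide exactly with those of the corollary, I would first invoke it to obtain $\F_{q^m}$-linearly independent elements $g_1,\ldots,g_\ell \in \mathcal{C}$ with $\bigcap_{i=1}^\ell \ker(g_i) = \{0\}$.

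Next I would package these into a single tuple $f=(g_1,\ldots,g_\ell)$, viewed as an $\F_q$-linear map $\F_{q^m}^\ell \to \F_{q^m}^\ell$. The joint-kernel condition says precisely that this map is injective, and since domain and codomain have the same finite $\F_q$-dimension $m\ell$, it is bijective, hence invertible in $(\tilde{\mathcal{L}}_{m,q}[\underline{x}])^\ell$. This is exactly the datum needed to apply the equivalence action described in the remark preceding the corollary, with the univariate factor taken to be the identity $g(x)=x$.

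I would then complete $g_1,\ldots,g_\ell$ to an $\F_{q^m}$-basis of $\mathcal{C}$ by choosing some $h_{\ell+1},\ldots,h_k \in \mathcal{C}$ so that $\mathcal{C} = \langle g_1,\ldots,g_\ell, h_{\ell+1},\ldots,h_k\rangle_{\F_{q^m}}$. Applying the right action by $f^{-1}$ (i.e.\ substituting $\underline{x} \mapsto f^{-1}(\underline{x})$ in every generator) gives an equivalent code $\mathcal{C}'$. Because $g_i$ is the $i$-th coordinate of $f$, the substitution sends $g_i(\underline{x})$ to $g_i(f^{-1}(\underline{x})) = x_i$, so the first $\ell$ generators become exactly $x_1,\ldots,x_\ell$. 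Setting $g_{\ell+j}(\underline{x}) := h_{\ell+j}(f^{-1}(\underline{x}))$ for $j=1,\ldots,k-\ell$ produces the desired normal form $\mathcal{C}' = \langle x_1,\ldots,x_\ell, g_{\ell+1},\ldots,g_k\rangle_{\F_{q^m}}$.

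There is no real obstacle here beyond verifying that the multivariate equivalence action is the one just used; the genuine work has already been carried out in Theorem \ref{th:disjointkernel}, whose proof produces the canonical choice of $g_1,\ldots,g_\ell$ via the matrix $A$ defining the complement $L$. The only point requiring slight care is to check that invertibility of the tuple $f$ in the sense of the equivalence group coincides with bijectivity of the associated $\F_q$-linear map on $\F_{q^m}^\ell$, which is immediate from identifying $\tilde{\mathcal{L}}_{m,q}[\underline{x}]^\ell$ with $\operatorname{End}_{\F_q}(\F_{q^m}^\ell)$.
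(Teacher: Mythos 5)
Your proposal is correct and follows essentially the same route as the paper: invoke Theorem \ref{th:disjointkernel} to get $\ell$ independent generators with trivial joint kernel, observe that the tuple they form is an invertible $\F_q$-endomorphism of $\F_{q^m}^\ell$, and precompose the code with its inverse to normalise the first $\ell$ generators to $x_1,\ldots,x_\ell$. The paper phrases the first step as ``we may assume $\bigcap_{i=1}^\ell\ker(f_i)=\{0\}$'' and calls the tuple $\varphi=(f_1,\ldots,f_\ell)$, but the argument is the same.
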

\begin{proof}
By Theorem \ref{th:disjointkernel}, we may assume that 
\[ \bigcap_{i=1}^\ell \ker(f_i) =\{0\}. \]
Now, consider the following map $\varphi=(f_1,\ldots,f_{\ell})$, that is
\[ \varphi \colon a \in \F_{q^m}^\ell \mapsto (f_1(a),\ldots,f_{\ell}(a))\in \F_{q^m}^\ell. \]
This is an $\F_{q}$-linear isomorphism due to the fact that $\bigcap_{i=1}^\ell \ker(f_i) =\{0\}$.
Consequently, 
\[ (f_1\circ \varphi^{-1}(\underline{x}),\ldots,f_\ell\circ \varphi^{-1}(\underline{x}))=(x_1,\ldots,x_{\ell}). \]
Therefore, the rank metric code \[\C'=\C\circ {\varphi^{-1}} =\langle x_1,\ldots,x_{\ell},g_{\ell+1}(\underline{x}),\ldots,g_{k}(\underline{x}) \rangle_{\F_{q^m}},  \]
where $g_i=f_i\circ \varphi^{-1}$ and the assertion is proved.
\end{proof}

We can now extend Theorem \ref{th:mxmqelinear} for codes in $\mathcal{L}_{m,q}[\underline{x}]$ when $\ell >1$.

\begin{theorem}
Suppose $q>2$. Let $\C=\langle f_1(\underline{x}),\ldots,f_k(\underline{x})\rangle_{\fqm}\subseteq \mathcal{L}_{m,q}[x_1,\ldots,x_{\ell}]$ be a $k$-dimensional $\fqm$-linear rank metric code such that $\dim_{\fq}\left( \bigcap_{i=1}^k \ker(f_i) \right)=0$ and $\C$ is $e$-divisible.
Then $\C$ is equivalent to a code $\C'=\langle x_1,\ldots,x_{\ell},g_{\ell+1}(\underline{x}),\ldots, g_k(\underline{x})\rangle_{\fqm}$ such that $g_{\ell+1}(\underline{x}),\ldots,g_k(\underline{x})$ are $\F_{q^e}$-linear.
\end{theorem}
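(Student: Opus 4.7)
The plan is to mirror the argument of Theorem \ref{th:mxmqelinear}, replacing Theorem \ref{th:directions} by Theorem \ref{th:dirmoreind} and working with the multivariate geometric correspondence developed earlier in this section. By the preceding corollary, I may assume $\C=\langle x_1,\ldots,x_\ell,g_{\ell+1},\ldots,g_k\rangle_{\fqm}$, and it suffices to show that each $g_j$ with $j>\ell$ is $\F_{q^e}$-linear. Fix such $j$ and restrict attention to the $(\ell+1)$-dimensional subcode $\C_j=\langle x_1,\ldots,x_\ell,g_j\rangle_{\fqm}$, which inherits the $e$-divisibility of $\C$. Via Proposition \ref{prop:ev} and the system correspondence of Section \ref{sec:qsystems}, the associated system is the graph
\[S_j=\{(\underline{a},g_j(\underline{a})):\underline{a}\in\fqm^\ell\}\subseteq\fqm^{\ell+1}\cong\AG(\ell+1,q^m),\]
an $\fq$-subspace of dimension $m\ell$, having $(q^m)^{(\ell+1)-1}$ elements and thereby matching the cardinality hypothesis of Theorem \ref{th:dirmoreind}.

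To invoke that theorem, I verify the direction bound. Two points $(\underline{a},g_j(\underline{a}))$ and $(\underline{b},g_j(\underline{b}))$ of $S_j$ determine the direction $[\underline{b}-\underline{a}:g_j(\underline{b}-\underline{a})]$ by additivity, and $\fq$-linearity of $g_j$ gives $[\mu\underline{c}:g_j(\mu\underline{c})]=[\underline{c}:g_j(\underline{c})]$ for every $\mu\in\fq^*$. Hence $|D|\leq (q^{m\ell}-1)/(q-1)$, and a short estimate using $q>2$ (specifically $q^m/(q-1)\leq (q^m+3)/2$ for $q\geq 3$) shows this is bounded above by $\tfrac{q^m+3}{2}q^{m(\ell-1)}+q^{m(\ell-2)}+\cdots+q^m$, so Theorem \ref{th:dirmoreind} applies with base field $\fqm$. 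Because intersections of $\fq$-subspaces with $\fqm$-lines are $\fq$-subspaces of $\fqm$, each $|S_j\cap l|$ is automatically a power of $q$, and Theorem \ref{th:dirmoreind} then yields that $S_j$ is linear over a subfield of $\fqm$ of the form $\F_{q^{e'}}$ with $e'\mid m$; equivalently, $g_j$ is $\F_{q^{e'}}$-linear.

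It remains to prove $e\mid e'$, which is the main obstacle. Unwinding the theorem, $e'$ equals the greatest common divisor of the quantities $s_l=\dim_{\fq}\{t\in\fqm:g_j(t\underline{c})=td\}$ as $l$ ranges over the lines of $\AG(\ell+1,q^m)$ with determined direction $(\underline{c},d)$, so the task reduces to showing that $e$ divides each such $s_l$. I would establish this by producing, for each such $(\underline{c},d)$, a codeword $\sum_i\lambda_i x_i+g_j(\underline{x})\in\C_j$ with $\sum_i\lambda_i c_i=-d$ whose $\fq$-kernel inside $\fqm^\ell$ is exactly the $s_l$-dimensional subspace $\{t\underline{c}:g_j(t\underline{c})=td\}$. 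Such a codeword has rank $m\ell-s_l$, which is divisible by $e$ by hypothesis; combined with $e\mid m$ (which follows since $x_1\in\C_j$ has rank $m$), one concludes $e\mid s_l$ and hence $e\mid e'$. Proving the existence of such a kernel-minimal codeword amounts to a dimension count in the $(\ell-1)$-dimensional $\fqm$-affine family of admissible $\underline{\lambda}$, exploiting the $\F_{q^{e'}}$-linearity already secured to bound the ``bad'' subvarieties of $\underline{\lambda}$ which would produce an enlarged kernel; this count is the delicate part of the argument, since the naive bound on the number of bad hyperplanes is too weak when $\ell\geq 2$.
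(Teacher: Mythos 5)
Your reduction to the canonical form, the restriction to the subcodes $\C_j=\langle x_1,\ldots,x_\ell,g_j\rangle_{\fqm}$, the identification of the graph $S_j$ with the associated system, and the bound $|D|\leq (q^{m\ell}-1)/(q-1)$ all match the paper's argument. The gap is exactly where you flag it, and it is not merely a delicate count: the kernel-minimal codeword you want to produce does not exist in general. A codeword $\sum_i\lambda_i x_i+g_j(\underline{x})$ corresponds to intersecting the $m\ell$-dimensional $\fq$-system $U$ with an $\fqm$-hyperplane $H$ of $\fqm^{\ell+1}$, and by Grassmann $\dim_{\fq}(U\cap H)\geq \dim_{\fq}(U)+\dim_{\fq}(H)-(\ell+1)m=m(\ell-1)$, whereas the target subspace $\{t\underline{c}:g_j(t\underline{c})=td\}$ has dimension $s_l\leq m$. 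So for $\ell\geq 2$ and $s_l<m(\ell-1)$ no hyperplane, hence no codeword, cuts $U$ exactly in that line intersection; equivalently, every codeword has rank at most $m$, so none can have rank $m\ell-s_l$. Your strategy of reading off $e\mid s_l$ from a single codeword's weight therefore cannot be repaired by a better estimate on the ``bad'' $\underline{\lambda}$.

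The paper closes this gap by a duality argument rather than by exhibiting special codewords, and it does so \emph{before} invoking Theorem \ref{th:dirmoreind}. From $e$-divisibility one knows $e\mid\dim_{\fq}(U\cap H)$ for every hyperplane $H$; Proposition \ref{prop:weightdual} converts this into $e\mid\dim_{\fq}(U^{\perp'}\cap\langle w\rangle_{\fqm})$ for every $w\neq 0$; partitioning an arbitrary hyperplane section of $U^{\perp'}$ into its intersections with the one-dimensional $\fqm$-subspaces gives $q^i-1=\sum_j N_{je}(q^{je}-1)$, whence $e\mid i=\dim_{\fq}(U^{\perp'}\cap H)$ for every hyperplane $H$; and dualizing back yields $e\mid\dim_{\fq}(U\cap\langle w\rangle_{\fqm})$ for every $w$. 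This is precisely the statement that every affine line with a determined direction meets $S_j$ in a power of $q^e$, so Theorem \ref{th:dirmoreind} (for $q>2$) directly returns $\F_{q^e}$-linearity of $S_j$, with no need to analyze the gcd $e'$ a posteriori. If you replace your final step by this double-dualization, the rest of your outline goes through.
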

\begin{proof}
Let write $\C=\langle x_1,\ldots,x_{\ell},g_{\ell+1}(\underline{x}),\ldots, g_k(\underline{x})\rangle_{\fqm}$. First note that $\rk(x_1)=m$ and since $\C$ is $e$-divisible then $e \mid m$.
Let 
\[U=\{(x_1,\ldots,x_{\ell},g_{\ell+1}(\underline{x})) : x_i \in \F_{q^m} \} \subseteq  \F_{q^m}^{\ell+1} \]
be the system associated with $\C'=\langle x_1,\ldots,x_{\ell},g_{\ell+1}(\underline{x}) \rangle_{\F_{q^m}}$ and note that $\dim_{\fq}(U)=\ell m$. 
Since the code $\C$ is $e$-divisible, then $\C'$ is $e$-divisible. This implies that $e$ divides $\dim_{\fq}(U\cap H)$, for any hyperplane $H$ of $\F_{q^m}^{\ell+1}$ by \eqref{eq:relweight} since $e\mid m$.
Consider now $U^{\perp'}$, via the duality described in Section \ref{sec:dual}.
By Proposition \ref{prop:weightdual}, we have
\[ \dim_{\fq}(U^{\perp'}\cap H^{\perp})=\dim_{\fq}(U\cap H)-(\ell-1)m, \]
that is $e \mid \dim_{\fq}(U^{\perp'}\cap \langle w \rangle_{\F_{q^m}})$, for any $w \in \F_{q^m}^{\ell+1} \setminus\{0\}$.
Now, let $H$ be any hyperplane of $\F_{q^m}^{\ell+1}$ and let $i=\dim_{\fq}(U^{\perp'}\cap H)$. 
For any $\alpha \in \mathbb{N}$ denote
\[ N_{\alpha}=|\{\langle w\rangle_{\mathbb{F}_{q^m}} \colon w \in \mathbb{F}_{q^m}^{\ell+1}\setminus\{0\}\,\,\text{and}\,\, \dim_{\fq}((U^{\perp'}\cap H)\cap \langle w\rangle_{\mathbb{F}_{q^m}})=\alpha \}|. \]
Since the one-dimensional $\F_{q^m}$-subspaces of $\mathbb{F}_{q^m}^{\ell+1}$ give a partition of $\mathbb{F}_{q^m}^{\ell+1}\setminus\{0\}$ and $e \mid \dim_{\fq}(U^{\perp'}\cap \langle w \rangle_{\F_{q^m}})$,for any $w \in \F_{q^m}^{\ell+1} \setminus\{0\}$, we have that
\[ q^i-1=\sum_j N_{je}(q^{je}-1), \]
which implies that $q^e-1 \mid q^i-1$ and hence $e \mid i$. Therefore, $e \mid \dim_{\fq}(U^{\perp'}\cap H)$, for any hyperplane $H$ of $\F_{q^m}^{\ell+1}$.
Again, using Proposition \ref{prop:weightdual} we obtain that $e \mid \dim_{\fq}(U\cap \langle w\rangle_{\F_{q^m}})$, for any $w \in \F_{q^m}^{\ell+1}\setminus\{0\}$.
Now, consider
\[ S=\{ (x_1,\ldots,x_{\ell},g_{\ell+1}(\underline{x}),1) \colon x_1,\ldots,x_{\ell}\in \F_{q^m}\} \subset \mathrm{AG}(\ell+1,q^m) \]
and 
\[ D=\{ \langle (x_1,\ldots,x_{\ell},g_{\ell+1}(\underline{x}),0) \rangle_{\F_{q^m}} \colon x_1,\ldots,x_{\ell}\in \F_{q^m}\} \subset H_{\infty}, \]
where $D$ is the set of the directions determined by $S$. 
Note that since
\[ \overline{U}=\{  (x_1,\ldots,x_{\ell},g_{\ell+1}(\underline{x}),0) \colon x_1,\ldots,x_{\ell}\in \F_{q^m}\} \subset \F_{q^m}^{\ell+1} \]
is an $\fq$-subspace of dimension $m\ell$ and the one-dimensional $\F_{q^m}$-subspaces of $\mathbb{F}_{q^m}^{\ell+1}$ give a partition of $\mathbb{F}_{q^m}^{\ell+1}\setminus\{0\}$, it follows that 
\[ |D|\leq \frac{q^{m\ell}-1}{q-1}. \]
The above inequality and the fact that every element of $D$ meets $\overline{U}$ in an $\fq$-subspace of dimension multiple of $e$ allow us to apply Theorem \ref{th:dirmoreind} when $q>2$, obtaining that $S$ is $\F_{q^e}$-linear, from which it follows that the function $g_{\ell+1}$ is $\F_{q^e}$-linear.
We can argue as before, by replacing $g_{\ell+1}$ with the functions $g_{\ell+2},\ldots,g_k$ obtaining that they are $\F_{q^e}$-linear, that is the assertion.
\end{proof}

Similarly to what we did for Corollary \ref{cor:matrixmmFqe}, we have the following result.

\begin{corollary}\label{cor:matrixmlmFqe}
Let $\C$ be an $\F_{q^m}$-linear rank metric code in $\fq^{m\times \ell m}$ for which the intersection of the kernels of the matrices in $\C$ is trivial. Then $\C$ is $e$-divisible if and only if it arises from a rank metric code in $\F_{q^e}^{\frac{m}e\times \frac{\ell m}e}$.
\end{corollary}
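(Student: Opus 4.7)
The strategy is to transport the matrix code to the multivariate linearized polynomial framework, apply the immediately preceding theorem, and then reinterpret the resulting $\F_{q^e}$-linearity as descent through the tower $\fq \subseteq \F_{q^e} \subseteq \fqm$. This is the natural generalization of how Corollary \ref{cor:matrixmmFqe} was deduced from Theorem \ref{th:mxmqelinear}, only now working with $\ell$ indeterminates rather than one.

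First I would fix an $\fq$-basis $\mathcal{B}$ of $\fqm^{\ell}$ and use the inverse of the map $\ev_{\mathcal{B}}$ from Proposition \ref{prop:ev} to identify $\C$ with an $\fqm$-subspace $\widetilde{\C} \subseteq \tilde{\mathcal{L}}_{m,q}[x_1,\ldots,x_{\ell}]$ of the same $\fqm$-dimension $k$ and the same weight distribution (so $\widetilde{\C}$ is again $e$-divisible). The hypothesis that the common kernel of the matrices in $\C$ is trivial translates, via the remark following Theorem \ref{th:disjointkernel}, into $\dim_{\fq}\bigl(\bigcap_{i=1}^{k}\ker(f_i)\bigr)=0$ for any $\fqm$-generators $f_1,\ldots,f_k$ of $\widetilde{\C}$; this is precisely the hypothesis needed in the preceding theorem.

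Assuming $q>2$, I then apply that theorem to $\widetilde{\C}$: it yields $e\mid m$ and an equivalent code $\widetilde{\C}' = \langle x_1,\ldots,x_{\ell},g_{\ell+1}(\underline{x}),\ldots,g_{k}(\underline{x})\rangle_{\fqm}$ in which every $g_j$ is $\F_{q^e}$-linear; the coordinate projections $x_1,\ldots,x_{\ell}$ are trivially $\F_{q^e}$-linear as well. Hence every element of $\widetilde{\C}'$ is an $\F_{q^e}$-linear map $\fqm^{\ell}\to\fqm$. Viewing $\fqm = \F_{(q^e)^{m/e}}$ as an $\F_{q^e}$-space of dimension $m/e$, the code $\widetilde{\C}'$ lies in $\tilde{\mathcal{L}}_{m/e,\,q^e}[x_1,\ldots,x_{\ell}]$, and the analogous evaluation map over $\F_{q^e}$ produces an $\fqm$-linear rank metric code $\C' \subseteq \F_{q^e}^{(m/e)\times (\ell m/e)}$ with $\C = Em(\C')$ (after adjusting bases). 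The converse is immediate: if $\C=Em(\C')$ then the rank-multiplication-by-$e$ property of $Em$ recalled in the introduction makes $\C$ automatically $e$-divisible.

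The main friction I expect is not conceptual but bookkeeping: one must carefully match the three equivalences in play (of matrix codes, vector codes, and polynomial codes) so that the trivial-common-kernel condition on $\C$ corresponds exactly to non-degeneracy of $\widetilde{\C}$, and so that the equivalence produced by the preceding theorem pulls back to an equivalence of matrix codes compatible with the $Em$ construction. Once this is checked, the statement is essentially a direct corollary of the preceding theorem, mirroring the passage from Theorem \ref{th:mxmqelinear} to Corollary \ref{cor:matrixmmFqe}.
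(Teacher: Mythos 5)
Your proposal matches the paper's intended argument exactly: the paper gives no separate proof for this corollary, stating only that it follows ``similarly to'' Corollary \ref{cor:matrixmmFqe}, i.e.\ by passing to $\tilde{\mathcal{L}}_{m,q}[x_1,\ldots,x_{\ell}]$ via $\ev_{\mathcal{B}}$, invoking the preceding theorem to get $\F_{q^e}$-linearity of all elements, and reading the result back as a code over $\F_{q^e}$ under $Em$. You also correctly flag the $q>2$ hypothesis inherited from that theorem, which the corollary as printed silently omits.
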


\section{Negative answers to Question \ref{question}}\label{sec:negative}

In this section we provide some negative answers to Question \ref{question} by constructing examples of $e$-divisible rank metric codes which do not arise from a rank metric code in $(\F_{q^{em}}^n,d_{q^e})$, by making use of the geometric correspondence described in Section \ref{sec:qsystems}.

\begin{proposition}\label{prop:no2dim}
Consider the following field extension $\mathbb{F}_{q^{t\ell}}/\mathbb{F}_q$ and let $e$ a positive integer less than $t$.
Let $S_1$ be an $\fq$-subspace of $\mathbb{F}_{q^t}$ with dimension $e$ such that $1 \in S_1$ and $S_2$ be an $\F_{q^t}$-subspace of $\mathbb{F}_{q^{t\ell}}$ and $\dim_{\fq}(S_2)=g e$.
Then $U=S_1 \times S_2 \subseteq \mathbb{F}_{q^{t\ell}}^2$ is an $\fq$-subspace of dimension $e+ge$ with the property that
\begin{equation}\label{eq:weight2dim} 
\dim_{\fq}(U \cap \langle w\rangle_{\mathbb{F}_{q^{t\ell}}})\in \{ 0,e,ge \}, 
\end{equation}
for any $w \in \mathbb{F}_{q^{t\ell}}^2$.
\end{proposition}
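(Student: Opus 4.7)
The plan is to check the dimension claim directly and then analyze $U\cap\langle w\rangle_{\mathbb{F}_{q^{t\ell}}}$ by case distinction on the components of $w=(w_1,w_2)$, using crucially that $S_2$ is an $\mathbb{F}_{q^t}$-subspace.

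First, since $S_1\subseteq\mathbb{F}_{q^t}\subseteq\mathbb{F}_{q^{t\ell}}$ is an $\mathbb{F}_q$-subspace of dimension $e$ and $S_2\subseteq\mathbb{F}_{q^{t\ell}}$ is an $\mathbb{F}_q$-subspace of dimension $ge$, the Cartesian product $U=S_1\times S_2$ is an $\mathbb{F}_q$-subspace of $\mathbb{F}_{q^{t\ell}}^2$ of dimension $e+ge$. This disposes of the first assertion.

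Next I would fix $w=(w_1,w_2)\in \mathbb{F}_{q^{t\ell}}^2\setminus\{0\}$ and describe $U\cap\langle w\rangle_{\mathbb{F}_{q^{t\ell}}}$. A point $(\lambda w_1,\lambda w_2)$ lies in $U$ exactly when $\lambda w_1\in S_1$ and $\lambda w_2\in S_2$. If $w_1=0$ then $w_2\neq 0$ and the intersection is $\{0\}\times S_2$, giving dimension $ge$. If $w_2=0$ then the intersection is $S_1\times\{0\}$ of dimension $e$ (and here $w_1\neq 0$ reduces to $1\in S_1$, given by hypothesis). The substantive case is $w_1,w_2$ both nonzero, where the intersection has the same $\mathbb{F}_q$-dimension as the set of $\lambda\in\mathbb{F}_{q^{t\ell}}$ satisfying both conditions, namely
\[
T:=w_1^{-1}S_1\cap w_2^{-1}S_2.
\]

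The key step is to exploit the $\mathbb{F}_{q^t}$-structure of $S_2$. Since $S_2$ is an $\mathbb{F}_{q^t}$-subspace of $\mathbb{F}_{q^{t\ell}}$, so is $w_2^{-1}S_2$. On the other hand $w_1^{-1}\mathbb{F}_{q^t}$ is a one-dimensional $\mathbb{F}_{q^t}$-subspace of $\mathbb{F}_{q^{t\ell}}$, and it contains $w_1^{-1}S_1\supseteq T$. Therefore
\[
V:=w_1^{-1}\mathbb{F}_{q^t}\cap w_2^{-1}S_2
\]
is an $\mathbb{F}_{q^t}$-subspace of the one-dimensional $\mathbb{F}_{q^t}$-space $w_1^{-1}\mathbb{F}_{q^t}$, hence either $V=\{0\}$ or $V=w_1^{-1}\mathbb{F}_{q^t}$. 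In the first case $T\subseteq V=\{0\}$, so $\dim_{\mathbb{F}_q}T=0$. In the second case $w_1^{-1}S_1\subseteq w_1^{-1}\mathbb{F}_{q^t}=V\subseteq w_2^{-1}S_2$, so $T=w_1^{-1}S_1$ and $\dim_{\mathbb{F}_q}T=e$.

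Combining the three cases yields $\dim_{\mathbb{F}_q}(U\cap\langle w\rangle_{\mathbb{F}_{q^{t\ell}}})\in\{0,e,ge\}$, as desired. The only step that is not purely bookkeeping is the dichotomy for $V$, and there the main obstacle is just recognizing that a one-dimensional $\mathbb{F}_{q^t}$-space has only two $\mathbb{F}_{q^t}$-subspaces; once this is spotted, the conclusion follows immediately.
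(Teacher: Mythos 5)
Your proof is correct. The only cosmetic quibble is the parenthetical in the case $w_2=0$: the identity $U\cap\langle (w_1,0)\rangle_{\mathbb{F}_{q^{t\ell}}}=S_1\times\{0\}$ holds for any nonzero $w_1$ and does not require $1\in S_1$ (that hypothesis is only needed later, to guarantee $U$ is a system, i.e.\ spans $\mathbb{F}_{q^{t\ell}}^2$).

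Your route differs from the paper's in a way worth noting. The paper reduces to a nonzero representative $w=(w_1,w_2)\in U$ with $w_1\neq 0$, exhibits the $e$ explicit vectors $(\lambda_i,\lambda_i w_2/w_1)$ (with $\{\lambda_i\}$ a basis of $S_1$) inside $U\cap\langle w\rangle_{\mathbb{F}_{q^{t\ell}}}$ to get the lower bound $\geq e$, and then obtains the matching upper bound $\leq e$ from the fact that $U\cap\langle w\rangle_{\mathbb{F}_{q^{t\ell}}}$ and $U\cap\langle(0,1)\rangle_{\mathbb{F}_{q^{t\ell}}}=\{0\}\times S_2$ are in direct sum inside the $(e+ge)$-dimensional space $U$. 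You instead transport the intersection along $\lambda\mapsto\lambda w$ to $T=w_1^{-1}S_1\cap w_2^{-1}S_2$ and invoke the dichotomy for $\mathbb{F}_{q^t}$-subspaces of the one-dimensional $\mathbb{F}_{q^t}$-space $w_1^{-1}\mathbb{F}_{q^t}$, which yields both bounds simultaneously. Both arguments pivot on the same structural fact (the $\mathbb{F}_{q^t}$-linearity of $S_2$ together with $S_1\subseteq\mathbb{F}_{q^t}$), and indeed the nonzero intersection you identify, $\{(s,sw_2/w_1):s\in S_1\}$, is exactly the set of vectors the paper exhibits. What your version buys is a complete description of the intersection (it is either trivial or this full $S_1$-fibre) without the complementary-subspace dimension count; what the paper's version buys is that it never needs to pass to the quotient picture and handles arbitrary $w$ by first reducing to $w\in U$. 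Either way the containment $\{0,e,ge\}$ follows, which is all the statement asserts.
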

\begin{proof}
The first part of the statement is trivial and we need to prove \eqref{eq:weight2dim}.
To this aim, let $w=(w_1,w_2) \in U$, with $w_1 \neq 0$, and let $\{\lambda_1,\ldots,\lambda_e\}$ be an $\fq$-basis of $S_1$. Since $S_2$ is an $\F_{q^t}$-subspace and since $S_1$ is contained in $\F_{q^t}$, then $(\lambda_1, \lambda_1 w_2/w_1),\ldots,(\lambda_e, \lambda_e w_2/w_1)$ are $e$ $\F_q$-linearly independent vectors of $U$ and so $\dim_{\fq}(U \cap \langle w \rangle_{\F_{q^{t \ell}}}) \geq e$. Now, since $\dim_{\fq}(U \cap \langle (0,1) \rangle_{\F_{q^{t\ell}}}) =ge$ and $(U \cap \langle w \rangle_{\F_{q^{t\ell}}}) \oplus (U \cap \langle (0,1) \rangle_{\F_{q^{t \ell}}})$ is an $\F_q$-subspace of $U$ we get that $\dim_{\fq}(U \cap \langle w \rangle_{\F_{q^{t\ell}}}) \leq e$. Therefore,  $\dim_{\fq}(U \cap \langle w \rangle_{\F_{q^{t \ell}}})=e$ and so
\[ \dim_{\fq}(U \cap \langle w \rangle_{\F_{q^{t \ell}}})\in \{0,e,ge\}, \]
for every $w \in \F_{q^{t\ell}}^2$. 
\end{proof}

Now, viewing $U$ of the above result as a $q$-system we obtain a class of $2$-dimensional linear divisible rank metric codes which cannot arise from a rank metric code in $(\F_{q^{em}}^n,d_{q^e})$.

\begin{corollary}
Let $t,g,e,\ell$ positive integers such that $e<t$, $ge \leq \ell t$ and $e\nmid t\ell$.
There exists a non-degenerate $e$-divisible linear rank metric of dimension two in $\F_{q^{t\ell}}^{e+ge}$ which cannot arise from a rank metric code over $\mathbb{F}_{q^e}$.
\end{corollary}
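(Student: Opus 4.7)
The plan is to produce the required code as the rank metric code associated, via the $q$-system/code correspondence of Section \ref{sec:qsystems}, to the $q$-system supplied by Proposition \ref{prop:no2dim}, and then read off its properties.

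First I would check that $U=S_1\times S_2$ is a non-degenerate system. Since $1\in S_1$, the vector $(1,0)$ lies in $U$; since $ge\geq 1$ I may pick $0\neq s\in S_2$, and then $(0,s)\in U$. These two vectors span $\F_{q^{t\ell}}^2$ over $\F_{q^{t\ell}}$, so $\langle U\rangle_{\F_{q^{t\ell}}}=\F_{q^{t\ell}}^2$ and $U$ is an $[e+ge,2,d]_{q^{t\ell}/q}$ system for some $d$. By the correspondence $\Phi$ of Section \ref{sec:qsystems} (see Theorem \ref{th:connection}), $U$ determines a non-degenerate two-dimensional rank metric code $\C\subseteq \F_{q^{t\ell}}^{e+ge}$.

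Next I would verify $e$-divisibility of $\C$. By formula \eqref{eq:relweight} of Theorem \ref{th:connection}, the weight of a nonzero codeword $xG\in\C$ has the form $(e+ge)-\dim_{\fq}(U\cap \langle w\rangle_{\F_{q^{t\ell}}})$ for some $w\in\F_{q^{t\ell}}^2\setminus\{0\}$. Proposition \ref{prop:no2dim} guarantees that the intersection dimension lies in $\{0,e,ge\}$, so the possible nonzero weights are $e+ge$, $ge$ and $e$, each a multiple of $e$. Hence $\C$ is $e$-divisible.

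The last step is to rule out that $\C$ arises from a code over $\F_{q^e}$. By the definition recalled at the end of Section 3.2, such a descent would require $\F_{q^{t\ell}}$ to be an extension of $\F_{q^e}$, equivalently $e\mid t\ell$; this is explicitly forbidden by the hypothesis $e\nmid t\ell$. Consequently $\C$ is the desired counterexample. The main obstacle is essentially absent here: Proposition \ref{prop:no2dim} and Theorem \ref{th:connection} already carry all of the analytic content, and the final step is merely a compatibility check between the ambient fields.
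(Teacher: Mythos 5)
Your proposal is correct and follows essentially the same route as the paper's own proof: take the system $U=S_1\times S_2$ from Proposition \ref{prop:no2dim}, pass to an associated code via Theorem \ref{th:connection} to read off the weights $e+ge$, $ge$, $e$, and then use $e\nmid t\ell$ to rule out descent to $\F_{q^e}$. Your justification of non-degeneracy via $(1,0)$ and $(0,s)$ with $0\neq s\in S_2$ is in fact slightly more careful than the paper's, which asserts $(0,1)\in U$ without noting that $1$ need not lie in $S_2$.
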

\begin{proof}
Let $U$ be as in Proposition \ref{prop:no2dim} and note that $U$ is a $[e+ge,2]$-system as $U$ contains the vectors $(1,0)$ and $(0,1)$. Let $\mathcal{C}$ be a code associated with $U$. By Theorem \ref{th:connection} and \eqref{eq:weight2dim}, it follows that the codewords of $\C$ have weight $e+ge, e+ge-e, e+ge-ge$, that is $e+ge, ge, e$. Therefore, the code $\C$ is a  non-degenerate $e$-divisible linear rank metric of dimension two in $\F_{q^{t\ell}}^{e+ge}$ and since $e$ does not divide $t\ell$, it cannot arise from a rank metric over $\F_{q^e}$.
\end{proof}

\section{Conclusions and open problems}

In this paper we answer to Question \ref{question}, providing a positive answer in the case in which the number of columns of the code is a multiple of $m$. 
We also find examples for which Question \ref{question} has a negative answer.

We conclude the paper by listing some open problems/questions that can be further explored.

\begin{openq}
Are there other conditions on the parameters involved in Question \ref{question} for which it has a positive answer?
\end{openq}

In the examples constructed in Section \ref{sec:negative}, in order to prove that an $e$-divisible rank metric code cannot arise from a larger field we use the fact that $e \nmid m$. So, the following question arises.

\begin{openq}
Are there any $e$-divisible linear rank metric in $\F_{q^m}^n$ for which $e\mid m$ but it does not arise from $\F_{q^e}$?
\end{openq}

The last question regards the linearity of $q$-systems and the embedabbility of the code.

\begin{openq}
How is the linearity of the $q$-system related to the matrix space in which the code can be embedded?
\end{openq}

 \section*{Acknowledgements}
The research of Olga Polverino, Paolo Santonastaso and Ferdinando Zullo was supported by the project ``VALERE: VAnviteLli pEr la RicErca" of the University of Campania ``Luigi Vanvitelli''. This  work  was  supported  by  the  “National  Group  for  Algebraic  and  Geometric  Structures, and their Applications” (GNSAGA – INDAM).

John Sheekey would like to thank the University of Campania ``Luigi Vanvitelli'' and his co-authors for their hospitality during the preparation of this paper.

\bibliographystyle{abbrv}
\bibliography{biblio}

\begin{thebibliography}{10}

\bibitem{alfarano2021linear}
G.~N. Alfarano, M.~Borello, A.~Neri, and A.~Ravagnani.
\newblock Linear cutting blocking sets and minimal codes in the rank metric.
\newblock {\em Journal of Combinatorial Theory, Series A}, 192:105658, 2022.

\bibitem{ball2008graph}
S.~Ball.
\newblock On the graph of a function in many variables over a finite field.
\newblock {\em Designs, Codes and Cryptography}, 47(1):159--164, 2008.

\bibitem{ball2007linear}
S.~Ball, A.~Blokhuis, A.~G{\'a}cs, P.~Sziklai, and Z.~Weiner.
\newblock On linear codes whose weights and length have a common divisor.
\newblock {\em Advances in Mathematics}, 211(1):94--104, 2007.

\bibitem{barth2007cusps}
W.~P. Barth and S.~Rams.
\newblock Cusps and codes.
\newblock {\em Mathematische Nachrichten}, 280(1-2):50--59, 2007.

\bibitem{bartz2022rank}
H.~Bartz, L.~Holzbaur, H.~Liu, S.~Puchinger, J.~Renner, A.~Wachter-Zeh, et~al.
\newblock Rank-metric codes and their applications.
\newblock {\em Foundations and Trends{\textregistered} in Communications and
  Information Theory}, 19(3):390--546, 2022.

\bibitem{berger2003isometries}
T.~P. Berger.
\newblock Isometries for rank distance and permutation group of {G}abidulin
  codes.
\newblock {\em IEEE Transactions on Information Theory}, 49(11):3016--3019,
  2003.

\bibitem{betsumiya2012triply}
K.~Betsumiya and A.~Munemasa.
\newblock On triply even binary codes.
\newblock {\em Journal of the London Mathematical Society}, 86(1):1--16, 2012.

\bibitem{blokhuis1999number}
A.~Blokhuis, S.~Ball, A.~E. Brouwer, L.~Storme, and T.~Sz{\H{o}}nyi.
\newblock On the number of slopes of the graph of a function defined on a
  finite field.
\newblock {\em Journal of Combinatorial Theory, Series A}, 86(1):187--196,
  1999.

\bibitem{buratti2018q}
M.~Buratti, M.~Kiermaier, S.~Kurz, A.~Naki{\'c}, and A.~Wassermann.
\newblock q-analogs of group divisible designs.
\newblock {\em Combinatorics and Finite Fields}, page~21, 2018.

\bibitem{etzion2014covering}
T.~Etzion.
\newblock Covering of subspaces by subspaces.
\newblock {\em Designs, codes and cryptography}, 72(2):405--421, 2014.

\bibitem{etzion2011q}
T.~Etzion and A.~Vardy.
\newblock On $ q $-analogs of {S}teiner systems and covering designs.
\newblock {\em Advances in Mathematics of Communications}, 5(2):161, 2011.

\bibitem{gorla2018codes}
E.~Gorla and A.~Ravagnani.
\newblock Codes endowed with the rank metric.
\newblock {\em Greferath M., Pavčević M., Silberstein N., Vázquez-Castro M.
  (eds) Network Coding and Subspace Designs. Signals and Communication
  Technology.}, 2018.

\bibitem{hiss2011finite}
G.~Hiss.
\newblock {\em Finite groups of Lie type and their representations}.
\newblock Cambridge University Press Cambridge, 2011.

\bibitem{hu2022divisible}
J.~Hu, Q.~Liang, and R.~Calderbank.
\newblock Divisible codes for quantum computation.
\newblock {\em arXiv preprint arXiv:2204.13176}, 2022.

\bibitem{huppert2013endliche}
B.~Huppert.
\newblock {\em Endliche gruppen {I}}, volume 134.
\newblock Springer-verlag, 2013.

\bibitem{kurz2021divisible}
S.~Kurz.
\newblock Divisible codes.
\newblock {\em arXiv preprint arXiv:2112.11763}, 2021.

\bibitem{kurz2021generalization}
S.~Kurz and S.~Mattheus.
\newblock A generalization of the cylinder conjecture for divisible codes.
\newblock {\em IEEE Transactions on Information Theory}, 68(4):2281--2289,
  2021.

\bibitem{landjev2019divisible}
I.~Landjev and A.~Rousseva.
\newblock Divisible arcs, divisible codes, and the extension problem for arcs
  and codes.
\newblock {\em Problems of Information Transmission}, 55(3):226--240, 2019.

\bibitem{liebhold2016automorphism}
D.~Liebhold and G.~Nebe.
\newblock Automorphism groups of {G}abidulin-like codes.
\newblock {\em Archiv der Mathematik}, 107(4):355--366, 2016.

\bibitem{lunardon2018nuclei}
G.~Lunardon, R.~Trombetti, and Y.~Zhou.
\newblock On kernels and nuclei of rank metric codes.
\newblock {\em Journal of Algebraic Combinatorics}, 46:313--340, 2017.

\bibitem{polverino2010linear}
O.~Polverino.
\newblock Linear sets in finite projective spaces.
\newblock {\em Discrete mathematics}, 310(22):3096--3107, 2010.

\bibitem{polverino2020connections}
O.~Polverino and F.~Zullo.
\newblock Connections between scattered linear sets and {MRD}-codes.
\newblock {\em Bulletin of the Institute of Combinatorics and its
  Applications}, 89:46--74, 2020.

\bibitem{Randrianarisoa2020ageometric}
T.~H. Randrianarisoa.
\newblock A geometric approach to rank metric codes and a classification of
  constant weight codes.
\newblock {\em Designs, Codes and Cryptography}, 88:1331–1348, 2020.

\bibitem{sheekeysurvey}
J.~Sheekey.
\newblock {MRD} codes: constructions and connections.
\newblock {\em Combinatorics and Finite Fields: Difference Sets, Polynomials,
  Pseudorandomness and Applications}, 23, 2019.

\bibitem{storme2001linear}
L.~Storme and P.~Sziklai.
\newblock Linear point sets and {R}{\'e}dei type k-blocking sets in {PG}$(n,
  q)$.
\newblock {\em Journal of Algebraic Combinatorics}, 14(3):221--228, 2001.

\end{thebibliography}

John Shekeey\\
School of Mathematics and Statistics,\\ 
University College Dublin, Dublin, Ireland\\
{{\em john.sheekey@ucd.ie}}

\bigskip

Olga Polverino, Paolo Santonastaso and Ferdinando Zullo\\
Dipartimento di Matematica e Fisica,\\
Universit\`a degli Studi della Campania ``Luigi Vanvitelli'',\\
I--\,81100 Caserta, Italy\\
{{\em \{olga.polverino,paolo.santonastaso,ferdinando.zullo\}@unicampania.it}}

\end{document}